%% file: workshop_paper.tex
\documentclass[sigconf, nonacm]{acmart}

\newcommand\vldbavailabilityurl{}   % leave empty or add your artifacts URL

\usepackage{graphicx} % Required for inserting images

\usepackage{amsmath,amsfonts,amsthm,amssymb}
\usepackage{thmtools}
\usepackage{url}
\usepackage{hyperref}
\usepackage{cleveref}
\usepackage{algorithm}
\usepackage{algpseudocode}
\algrenewcommand\algorithmicrequire{\textbf{Input:}}
\algrenewcommand\algorithmicensure{\textbf{Output:}}
\newcommand{\algrule}[1][.2pt]{\par\vskip.2\baselineskip\hrule height #1\par\vskip.2\baselineskip}
\usepackage{tikz}
\usepackage{bbm}
\usepackage{stfloats}
\usepackage{booktabs}   % \toprule, \midrule, \cmidrule, \bottomrule
\usepackage{multirow}   % \multirow for the spanning Dataset / Recall@k cells
\usepackage{array}

\input{bib_macros}

\usetikzlibrary{patterns, arrows.meta}

\input{math_macros.tex}

\title{Almost Navigable Graphs}

\author{Pratyush Avi}
\affiliation{%
  \institution{New York University}
  \city{}
  \country{}
}
\email{pratyushavi@nyu.edu}

\author{Christopher Musco}
\affiliation{%
  \institution{New York University}
  \city{}
  \country{}
}
\email{cmusco@nyu.edu}

\begin{document}

    \begin{abstract}
    Graph-based methods like HNSW, DiskANN, NSG, and others have become an increasingly popular choice for implementing approximate nearest neighbor search (ANNS) in Vector Databases (VecDBs). The success of these methods has motivated the study of how to best construct a search graph for a given dataset. To that end, \emph{navigability} has been identified as a desirable graph property which ensures good ANNS performance when combined with greedy search. 
    
    However, for a dataset with $n$ vectors, the sparsest navigable graph requires $O(n\sqrt{n})$ edges in the worst-case, and we show empirically that, for typical billion node datasets, 100s of edges are needed per node. This leads to slow search and high memory requirements. Moreover, under standard complexity theoretical assumptions, it was recently established that constructing a sparse navigable graph requires $\Omega(n^{2-\epsilon})$ time, which is prohibitive for large datasets.

    We address these concerns by introducing a relaxed notation of navigability called ``$\gamma$-almost navigability'' for any $\gamma \in [0,1]$, with $\gamma = 1$ corresponding to full navigability. We prove that any dataset (under any distance) admits a $\gamma$-almost navigable graph with just $O\left(\frac{n}{1-\gamma}\right)$ edges, linear in the dataset size. We present a randomized algorithm for constructing such a graph in near-linear time. 

    While we prove that $\gamma$-almost navigability sacrifices the worst-case search guarantees enjoyed by navigability, we  show empirically that greedy beam search still performs well in such graphs when $\gamma < 1$. Indeed, we obtain improved recall-runtime tradeoffs on a variety of datasets compared to fully navigable graphs. Moreover, our graphs are more space efficient, with degree typically less than half that of a fully navigable graph for comparable performance.
    \end{abstract}

    \maketitle

    \section{Introduction}
    Approximate nearest neighbor search (ANNS) is the central algorithmic problem underlying vector databases (VecDBs). Formally, we are given a set of points $P = \{p_1, \dots, p_n\} \subset \R^m$ and a distance function $d:\R^m \times \R^m \to \R$. The goal is to preprocess the points into a data structure so that, given any query vector $q \in \R^m$, we can efficiently find the $k$-nearest neighbors of $q$ in $P$, or at least $k$ points that are near-minimizers of $d(x,q)$ over $x\in P$. 
    
    For standard distance functions, the complexity of classical approaches for the ANN problem, like k-d trees, scales exponentially in the data dimension, $m$. This has motivated a wide variety of alternative approaches specifically targeted at high-dimensional datasets, including locality-sensitive hashing \cite{IndykMotwani:1998,AndoniIndyk:2008,AndoniIndykLaarhoven:2015,LvJosephsonWang:2007}, product quantization/clustering methods \cite{JegouDouzeSchmid:2011,JohnsonDouzeJegou:2021,DouzeGuzhvaDeng:2026}, and tree-based methods \cite{BawaCondieGanesan:2005,BeygelzimerKakadeLangford:2006,AndoniRazenshteynNosatzki:2017}.
    More recently, \emph{graph-based methods} have emerged as a popular option for high-dimensional search, performing well on  ANN benchmarks and competitions \cite{AumullerBernhardssonFaithfull:2020,SimhadriWilliamsAumuller:2022,SimhadriAumullerDouze:2026,ManoharShenBlelloch:2024}. Such methods build on ideas dating back to the 1990s \citep{AryaMount:1993,Clarkson:1994,Navarro:1999,KrauthgamerLee:2004}, and include the Hierarchical Navigable Small World Graph method (HNSW) \cite{MalkovYashunin:2020}, Microsoft's DiskANN \cite{KrishnaswamyManoharSimhadri:2024,SubramanyaDevvritKadekodi:2019}, the Navigating Spreading-out Graph method (NSG)  \cite{FuXiangWang:2019}, and others \cite{FuCai:2016,MalkovPonomarenkoLogvinov:2014,HarwoodDrummond:2016}. 
    
    While graph-based methods vary in details, they follow a common theme: a directed graph $G$ is constructed with one node corresponding to every point in $P$. Queries are serviced by running greedy search on the graph. In particular, we start at some $p\in P$, then move to $p$'s out-neighbor in $G$ that gets us closest to $q$, repeating this process until no further improvement is possible.  Typically, a ``back-tracking'' variant of greedy search called \emph{beam search} is used in practice to avoid getting stuck in local minima.

    % These methods build on ideas dating back to the 1990s \citep{AryaMount:1993,Clarkson:1994,Navarro:1999,Kleinberg:2000,krauthgamer2004navigating}

    \subsection{Graph Navigability}
    Given the simplicity of the greedy search process, a key differentiator between graph-based methods is how the search graph $G$ is constructed.\footnote{Some methods, like HNSW, actually utilize a hierarchy of multiple search graphs, running greedy search at each level of the hierarchy.} There is a trade-off: higher degree graphs typically yield more accurate results, but also take more space to store. Moreover, since the per-iteration cost of greedy search scales with the degree of the current node, high degree leads to slower search time.

    While most existing constructions are heuristic, there has been recent interest in formally defining desirable search graph properties, and then separately designing algorithms to construct \emph{sparse graphs} with those properties \cite{ConwayDhulipalaFarach-Colton:2026,KhannaPadakiWaingarten:2025,IndykXu:2023}. Of particular interest is the property of \emph{navigability}, which dates back to work on Stanley Milgram's ``small-world'' phenomenon \cite{Milgram:1967,TraversMilgram:1977,BogunaKrioukovClaffy:2009,ClausetMoore:2003,Kleinberg:2000}. Informally, navigability means that, if a simple greedy search is initialized at any point $p\in P$, it will successfully find any ``in-distribution'' query $q\in P$ (if $q$ is in $P$, it is of course its own nearest neighbor).
    
    Under the standard assumption that distances between points in $P$ are unique\footnote{Without this assumption, ``greedy search'' is not fully specified unless we define a tie-breaking rule. It is simpler to assume unique distances, as this can be ensured, e.g., by adding an arbitrarily small random perturbation to the points in $P$.}, navigability has the following equivalent definition:

     \begin{definition}[Navigable Graph]\label{def:navigability}
            Let $P = \{p_1, \dots, p_n\}$ be a set of points and let $d: P \times P \to \R^{\geq0}$ be any distance function satisfying $d(p,p) = 0$ for all $p \in P$ and $d(p,r) > 0$ for all $p,r\in P$, $p\neq r$. 
            
            A graph $G = (P, E)$ is \emph{navigable} if, for all pairs $p,r \in P$, $p\neq r$, there is a directed edge $(p, s) \in E$ such that $d(s, r) < d(p, r)$.
    \end{definition}
In words, navigability demands that every node $p$ has some out-edge that brings it closer to any other node $r$ in the dataset. That out-edge could be to $r$ itself, so we observe that the complete graph is trivially navigable. 
   
   Navigability is an attractive property for a number of reasons. First, a desirable property of any ANN method is that, for any query $q$, the method returns some approximate nearest neighbor $\tilde{x}$ satisfying $d(\tilde{x},q) < C\cdot \min_{x\in P} d(x,q)$ for some approximation factor $C > 1$. This is the guarantee promised, e.g., by locality sensitive hashing methods. Navigability ensures that this guarantee at least holds for all $q\in P$: in this case, the right hand side is $0$, so greedy search must return $q$ itself to obtain a multiplicative approximation. 

   Moreover, while navigability alone does not ensure good accuracy when $q\notin P$, natural strengthenings of the definition do, including $\alpha$-reachability \cite{IndykXu:2023,SubramanyaDevvritKadekodi:2019,GollapudiKrishnaswamyShiragur:2025} and $\tau$-monotonicity \cite{PengChoiChan:2023}. Multiplicative approximation can also be ensured by combining a navigable graph with the Adaptive Beam Search algorithm from \cite{Al-JazzaziDiwanGou:2025}.

   Finally, beyond theoretical motivation, navigability is relevant practically. Many existing methods target the construction of navigable graphs \cite{SubramanyaDevvritKadekodi:2019}, and indeed the property lends its name to popular methods like the Hierarchical Navigable Small World Graph method (HNSW) and the Navigating Spreading-out Graph method (NSG).

   \paragraph{Limitations of Navigability}
   Despite its current importance in the literature on graph-based search, navigability has a number of limitations. First, it was recently established that, for any distance function, all datasets admit a navigable graph with $O(n\sqrt{n})$ edges \cite{ConwayDhulipalaFarach-Colton:2026}. While far sparser than the complete graph, this bound is also known to be tight, even for random points in $O(\log n)$ dimensions \cite{DiwanGouMusco:2024}. Sparse graphs can be obtained under further data assumptions like bounded doubling dimension \cite{IndykXu:2023,Har-PeledRaichelRobson:2026}, but we find that, for modern billion node datasets, constructing a navigable graph requires hundreds of edges per node, which is much higher than the degree of search graphs used in practice (see \Cref{sec:experiments} for details).

    Beyond high degree requirements, which drive up index size and search time, constructing a navigable graphs is expensive. While recent progress shows how to build near-optimally sparse navigable graphs in $\tilde{O}(n^2)$ time, it was also established that no faster runtime is possible under the Strong Exponential Time Hypothesis. Even for points in $O(\log n)$ dimensional Euclidean space, for any constant $\epsilon$, constructing a navigable graph with $o(n^{2-\epsilon})$ edges (only slightly better than the complete graph) requires $\Omega(n^{2-\epsilon})$ time \cite{ConwayDhulipalaFarach-Colton:2026,KhannaPadakiWaingarten:2025}.

    \subsection{Our Contributions}
    Given the above limitations, it seems natural to consider \emph{relaxed} variations of the navigability property that allow for sparser graphs, faster construction times, or both. Indeed, few practical graph-based methods construct \emph{truly navigable graphs}. Instead, it is common to use fast heuristics that ideally return something ``close'' to navigable. An important example is Microsoft's DiskANN method. If initialized with a complete graph the ``Robust Prune'' algorithm introduced in that work would indeed return a navigable graph. However, doing so is too slow, so a heuristic initialization is used instead \cite{SubramanyaDevvritKadekodi:2019}.

    Nevertheless, to the best of our knowledge, there have not been efforts to quantify how well existing methods approximate the navigability property, or to design methods targeting a specific notation of approximation. In this work, we address that gap by introducing and studying one possible relaxation of navigability. In particular, we define the notation of $\gamma$-almost navigability:
    \begin{definition}[$\gamma$-Almost Navigable Graph]\label{def:almost-navigability}
        Let $P = \{p_1, \dots, p_n\}$ be a set of points and let $d: P \times P \to \R^{\geq0}$ be any distance function satisfying $d(p,p) = 0$ for all $p \in P$ and $d(p,r) > 0$ for all $p,r\in P$, $p\neq r$. For a direct graph $G = (P, E)$, let 
            $$
                C_p = \{r \in P \mid  \exists (p,s) \in E \text{ such that } d(s, r) < d(p, r)\}.
            $$
        
        For a parameter $\gamma \in [0,1]$, $G$ is \emph{$\gamma$-almost navigable} if for every point $p \in P$, $|C_p| \geq \gamma \cdot (n-1)$.
    \end{definition}
    In words, $\gamma$-almost navigability demands that every node has an out edge that takes it closer to a $\gamma$ fraction of the other nodes in the dataset. Setting $\gamma = 1$ recovers navigability. The definition becomes easier to satisfy for smaller $\gamma$. 

    Our first result is that this natural relaxation significantly reduces the worst-case density required to construct a search graph. In contrast to fully navigable graphs, which require $\Omega(n^{3/2})$ edges \cite{DiwanGouMusco:2024}, we show that any data set, under any distance function, admits a $\gamma$-almost navigable graph with just $O(n)$ edges. The leading constant scales with $1/(1-\gamma)$.

    \begin{restatable}{theorem}{almostNavigableExistance}\label{lem:almost-navigable-existence}
        For any point set $P$, distance function $d: P \times P \to \R^{\geq 0}$, and $\gamma \in [0,1)$, there exists a $\gamma$-almost navigable graph with average out-degree at most $\frac{4}{1-\gamma}$. 
    \end{restatable}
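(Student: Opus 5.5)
The plan is to build the graph one vertex at a time. Each $p$ gets a set of out-edges that covers at least $\gamma(n-1)$ of the other points, and the count is carried out globally, since the bound only concerns the \emph{average} out-degree. Some vertices, such as the centre of a star-like configuration, genuinely need $\Omega(n)$ out-edges, so no uniform per-vertex bound can hold.

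\textbf{Notation.} Fix $p$. An edge $(p,s)$ covers $S_{p,s}=\{r : d(s,r)<d(p,r)\}$, and this set always contains $s$. For each $r$, let $D_p(r)=|\{s\neq r : d(s,r)<d(p,r)\}|$ be the rank of $p$ in $r$'s distance order, not counting $r$ itself. By unique distances, for each fixed $r$ the values $D_p(r)$ over $p\neq r$ are exactly $\{0,1,\dots,n-2\}$. This permutation structure is the only global information available, so every averaging argument has to go through it.

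\textbf{Step 1 (random edges plus direct patches).} Give every $p$ a set $H_p$ of $k$ independent uniform out-neighbours, with $k=\Theta(1/(1-\gamma))$. A point $r$ is left uncovered exactly when $p$ precedes every sample of $H_p$ in $r$'s order. This happens with probability about $(1-D_p(r)/(n-1))^k$. Summing over $p$ for fixed $r$ via the permutation property gives $\sum_p \Pr[r\notin C_p]\approx (n-1)/(k+1)$. So the expected total number of uncovered pairs is at most $n(n-1)/(k+1)$, and on average each $p$ misses about $(n-1)/(k+1)\le \tfrac{1-\gamma}{2}(n-1)$ points.

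\textbf{Step 2 (fixing the excess).} For each $p$ whose uncovered set $U_p$ still has more than $(1-\gamma)(n-1)$ elements, add out-edges greedily. Each added edge goes to the $s\in U_p$ covering the most of $U_p$, and this continues until $|U_p|\le(1-\gamma)(n-1)$. Adding an edge directly to $r$ always covers $r$, so the process terminates.

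\textbf{Main obstacle.} The difficulty is bounding the total cost of Step 2 by $O(n/(1-\gamma))$. Markov's inequality alone does not suffice. The number of "bad" $p$ can be as large as $\Theta(n/(k(1-\gamma)))$, and a bad $p$ may need $\Theta(n)$ patch edges, which gives only $O(n^2/k)$.

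My first attempt is to restrict the random sampling of a bad vertex to its own uncovered set $U_p$, rather than all of $P$. Then I would redo the permutation counting inside $U_p\cup\{p\}$ and recurse. For each $r\in U_p$, the points of $U_p$ preceding $p$ in $r$'s order are exactly the candidates that cover $r$. I would aim to show that a constant number of samples from $U_p$ shrinks $|U_p|$ by a constant factor in an amortised sense. The amortisation would be over all vertices whose uncovered sets contain $r$ near the top of $r$'s order, charging each new edge to pairs $(p,r)$ with small rank $D_p(r)$.

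If that fails, the fallback is to pay for the genuinely "central" vertices explicitly. The number of $p$ with $|\{r: D_p(r)<t\}|\ge \tfrac{1-\gamma}{2}(n-1)$ is at most $2t/(1-\gamma)$. I would then try to balance $t$ against $k$ so that direct edges for these few vertices, plus $k$ random edges for the rest, total at most $\tfrac{4}{1-\gamma}n$, with the constant $4$ coming from the two factors of $2$ in these splits.

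Finally, since the bound holds in expectation over the random choices, some realisation of the randomness achieves it, which proves existence.
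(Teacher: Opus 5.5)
\textbf{There is a genuine gap.} The step that carries the whole bound is left open: showing that the patch edges of Step 2 total $O(n/(1-\gamma))$. Neither remedy you sketch closes it.

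Step 1 is correct. For fixed $r$, the permutation structure gives $\sum_{p\neq r}\Pr[r\notin C_p]\le (n-1)/(k+1)$. But this controls only the \emph{average} number of misses. After Markov, up to half the vertices may be unsatisfied, and you have no handle on what they cost.

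Your fallback cannot be balanced into a linear bound. Budgeting $n-1$ edges for each of up to $2t/(1-\gamma)$ central vertices forces $t=O(1)$. However, non-centrality at threshold $t$ only says that most $r$ have $D_p(r)\ge t$. Suppose $p$ sits at rank exactly $t$ in every $r$'s order. Then $k$ uniform out-neighbours leave each $r$ uncovered with probability $\bigl(1-\frac{t+1}{n-1}\bigr)^k$. Certifying such a $p$ therefore needs $k(t+1)=\Omega\bigl(n\log\frac{1}{1-\gamma}\bigr)$. Your budget $\frac{2t}{1-\gamma}(n-1)+kn$ then becomes $\Omega\bigl(\frac{tn}{1-\gamma}+\frac{n^2}{t}\log\frac{1}{1-\gamma}\bigr)$. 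After optimizing over $t$, this is $\Omega(n^{3/2})$ for any fixed $\gamma\in(0,1)$, which is just the full-navigability tradeoff again.

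The first attempt, resampling inside $U_p$ with an unspecified amortised charge to small-rank pairs, is not yet an argument. In particular, rerunning a random-edge step on the set $\Pi$ of previously failed vertices gives no guaranteed progress. When $\Pi$ consists of structurally central points, $\sum_{p\in\Pi}\Pr[r\notin C_p]$ can be close to $|\Pi|$ for many $r$, so Markov over $\Pi$ says nothing.

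\textbf{The missing idea} is to correlate the edges so that failures within a group are \emph{disjoint}.
\begin{enumerate}
\item Partition the points into groups $S$ of size $\lceil 2/(1-\gamma)\rceil$ and put a clique on each. A vertex $v\in S$ then misses exactly the points whose nearest member of $S$ is $v$.
\item These ownership sets $U_{S,v}$ partition $P$. So $\sum_{v\in S}|U_{S,v}|=n$ holds deterministically for \emph{every} group, whatever its members.
\item Markov inside the group shows that at least half its members have $|U_{S,v}|\le 2n/|S|\le(1-\gamma)n$. Each such $v$ misses at most $(1-\gamma)n-1\le(1-\gamma)(n-1)$ other points.
\item Because this holds for arbitrary groups, you can regroup only the failures, discard their edges, and repeat. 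The unsatisfied set shrinks geometrically. Once fewer than $\lceil 2/(1-\gamma)\rceil$ remain, connect those to everything.
\item Every vertex ends with either $\lceil 2/(1-\gamma)\rceil-1<\frac{2}{1-\gamma}$ clique edges, or, for the fewer than $\frac{2}{1-\gamma}$ final leftovers, $n-1$ edges. The total is below $\frac{4}{1-\gamma}n$.
\end{enumerate}

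Your Step 1 computation is essentially an averaged version of this. Replacing $k$ independent neighbours by membership in a group of size $k+1$ turns the inequality $\sum_p\Pr[r\notin C_p]\le (n-1)/(k+1)$ into an exact statement: each $r$ is missed by exactly one member of each group. That exactness, holding for any group composition, is what makes the peeling recursion work.
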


    \Cref{lem:almost-navigable-existence} is proven in \Cref{sec:existance_proof} via a constructive argument. We show how to build a linear-sized almost navigable graph using a ``clique peeling'' method that is reminiscent of recent techniques used to construct fully navigable graphs \cite{ConwayDhulipalaFarach-Colton:2026}. We start by partitioning $P$ into arbitrary sets of $O(1/(1-\gamma))$ points. For each such set, $S$, we add a clique to the search graph $G$. After doing so, any $p\in S$ has an edge closer to all points in the dataset \emph{except those for which it is the closest point in $S$.} On average, each point $p\in S$ is closest to  $(1-\gamma)n$ points, so an expectation argument shows that at least half of the points in $S$ have edges closer to a $\gamma$ fraction of points in $P$. I.e., half of the points satisfy the constraint of \Cref{def:almost-navigability}.
    
    We can then repeat the construction, arbitrarily grouping any remaining points that do not satisfy the constraints of \Cref{def:almost-navigability} and adding a new set of cliques. After $O(\log n)$ rounds, all points will have edges close to a $\gamma$ fraction of points in $P$, as required. 

    Naively, the above procedure runs in ${O}(n^2)$ time, as we need to compute distances between all pairs of points to determine which points have satisfied the $\gamma$-almost navigability requirement, and which points should continue to the next round of the algorithm. However, if we introduce randomization, this cost can be reduced: checking a sample of roughly $O(1/(1-\gamma))$ points suffices to determine, with high probability, if a given points $p$ has an edge closer to a $\gamma$-fraction of points in the dataset. This leads to an algorithm whose runtime scales \emph{linearly} instead of quadratically with $n$:
       \begin{restatable}{theorem}{almostNavigableConstruction}\label{thm:sparse-almost-navigable-construction}
        There is an algorithm that, given any point set $P$, distance function $d: P \times P \to \R^{\geq 0}$ computable in time $T$, $\gamma \in [0,1)$, and $\delta\in (0,1)$, constructs a $\gamma$-almost navigable graph with average out-degree $O\left(\frac{1}{1-\gamma}\right)$ in ${O}\left(\frac{nT\log(n/\delta)}{1-\gamma}\right)$ time, with prob. at least $1-\delta$.
    \end{restatable}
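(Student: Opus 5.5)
Our approach is to randomize the clique-peeling construction behind \Cref{lem:almost-navigable-existence}. We maintain an active set $A \subseteq P$, starting with $A = P$, and proceed in rounds. In each round we partition $A$ arbitrarily into groups of size $k = \lceil c/(1-\gamma)\rceil$ for a suitable constant $c$; a leftover group of smaller size is handled by adding it to the previous group or by clique-ing it together with arbitrary already-finished points. We add a clique on each group. A clique costs $O(k^2)$ edges and contributes out-degree $O(k)$ to each of its members. If we can ensure that a constant fraction of $A$ is removed in every round, then the number of active points decays geometrically. The total edge count is then at most $\sum_t |A_t| \cdot O(k) = O(nk) = O(n/(1-\gamma))$, giving average out-degree $O(1/(1-\gamma))$. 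The clique-building cost in time is bounded by the same sum, namely $O(nkT)$.

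The substance lies in deciding which points leave $A$, while avoiding the $O(n^2)$ exact computation of $|C_p|$. For $p$ in group $S$, every $r$ whose nearest point in $S$ is not $p$ belongs to $C_p$. (An $r \in S$ with $r \ne p$ is covered by the edge $(p,r)$. An $r \notin S$ that is closer to some $s\in S$ than to $p$ is covered by $(p,s)$.) Define $U_p = \{r \in P\setminus\{p\} : p \text{ is the closest point of } S \text{ to } r\}$, so that $|C_p| \ge n-1-|U_p|$. The sets $U_p$ for $p \in S$ are disjoint, so $\sum_{p\in S}|U_p| \le n$. By Markov's inequality, at most $|S|/4$ members of $S$ have $|U_p| > 4n/|S|$, which is at most $(1-\gamma)(n-1)/2$ once $c$ is chosen large enough. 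In short, at least three quarters of each group are ``comfortably good'', with $|U_p| \le (1-\gamma)(n-1)/2$.

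For the test itself, each $p \in A$ samples $m = O\bigl(\frac{\log(n/\delta)}{1-\gamma}\bigr)$ uniform points $r$ of $P$. For each sample we check, in $O(kT)$ time, whether $r \in U_p$, by comparing $d(p,r)$ against $d(s,r)$ for all $s \in S$. Point $p$ is declared finished, and removed from $A$ permanently, if the empirical fraction of samples in $U_p$ is below $\frac{3}{4}(1-\gamma)$. A multiplicative Chernoff bound gives two guarantees, each holding with probability $1-\delta/n^2$ per test. First, any $p$ with $|U_p| > (1-\gamma)(n-1)$ is never declared finished, which guarantees correctness because $C_p \supseteq P\setminus(U_p\cup\{p\})$ and $C_p$ only grows with later edges. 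Second, any comfortably good $p$ is declared finished. Given the second guarantee, each round removes at least $3/4$ of every full group, so $|A|$ shrinks by a constant factor and there are $O(\log n)$ rounds. A union bound over at most $O(n\log n)$ tests then yields overall success probability $1-\delta$. The $1$-versus-$n-1$ discrepancies in sampling are absorbed into the constants.

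The main obstacle is the runtime accounting. A naive bound charges $m \cdot k \cdot T = O\bigl(\frac{\log(n/\delta)}{(1-\gamma)^2}T\bigr)$ per test, which overshoots the stated bound by a factor of $1/(1-\gamma)$. To fix this, we would try to use the same sampled set $R$ of $m$ points for an entire group. We compute the nearest point in $S$ for each $r \in R$ once, in $O(mkT)$ time per group. This gives every member's count simultaneously, at a per-point cost of $O(mT)$, which sums to $O\bigl(\frac{nT\log(n/\delta)}{1-\gamma}\bigr)$ over the geometric rounds. Within a group the estimates are then correlated, but the per-point Chernoff bounds and the union bound still apply unchanged, since each individual estimate remains an average over uniform samples. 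Rebalancing the leftover small groups affects only constants.
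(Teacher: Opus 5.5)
Your proposal is correct and has the same skeleton as the paper's proof via \Cref{alg:sparse-almost-navigable}. Both use clique peeling with groups of size $\Theta(1/(1-\gamma))$, and both share one sample of $O(\log(n/\delta)/(1-\gamma))$ points among all members of a group (the paper shares it across a whole round), so each active point costs only $O(T\log(n/\delta)/(1-\gamma))$ per round. Both use a one-sided Chernoff bound so that no point with too large an ownership set is ever declared finished, and both sum the cost via geometric decay of the active set.

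Where you differ is the progress step. You show that ``comfortably good'' points pass the test through a second, upper-tail Chernoff bound. As a result, your round count, degree bound and running time hold only with high probability, and you need the gap between the thresholds $\tfrac12$, $\tfrac34$, $1$. The paper instead applies the averaging argument of \Cref{claim:clique-ownership} to the empirical counts themselves. The sets $\hat U_{S_j,v}$, $v\in S_j$, partition the multiset $W$ of $w$ samples, so with $|S_j|\ge 4/(1-\gamma)$ at most half of each group can exceed the threshold $(1-\gamma)w/2$, whatever $W$ is. This makes progress, the number of rounds, the edge count and the runtime deterministic, with randomness affecting only correctness. The same observation applies verbatim to your test, since your empirical fractions within a group sum to at most $1$. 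So your upper-tail bound and the notion of ``comfortably good'' can simply be dropped.

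One small omission concerns the end of the process. When fewer than $k$ points remain active there is no full group to absorb them, and padding with finished points need not make progress. Consider a hub point $p$ with $d(p,r)<d(s,r)$ for all $r\neq p$ and $s\notin\{p,r\}$. It has $|U_p|=n-|S|$ for every group $S\ni p$, and in fact $C_p$ is exactly its out-neighborhood, so it genuinely needs out-degree at least $\gamma(n-1)$. You therefore need the terminal step of \Cref{lem:almost-navigable-existence} and \Cref{alg:sparse-almost-navigable}: connect the fewer than $k$ survivors to all of $P$. This adds only $O(n/(1-\gamma))$ edges and no distance computations.
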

    \Cref{thm:sparse-almost-navigable-construction} is proven in \Cref{sec:linear_time}. Combined with \Cref{lem:almost-navigable-existence}, it establishes polynomial improvements in graph size and construction time over fully navigable graphs. In particular, $O(n^{3/2})$ size and $\tilde{O}(n^2)$ construction time are reduced to $O(n)$ and $\tilde{O}(n)$, respectively, when we relax the definition to $\gamma$-almost navigability.

    \paragraph{Value for Greedy Search} Given these efficiency gains, it remains to determine if $\gamma$-almost navigable graphs preserve the  performance of fully navigable graphs for approximate nearest neighbor search. We explore this question both theoretical and empirically. 

    On the theoretical side, we show a negative result in \Cref{sec:hard_instance}. Recall that a desirable property of navigability is that greedy search run on a navigable graph is at least guaranteed to correctly answer any ``in-distribution'' query, $q\in P$. This is a necessary property for good ANN performance, even if it is not sufficient in the worst-case. 
    
    It is natural to ask if $\gamma$-almost navigability leads to a relaxed version of this guarantee. For example, we might hope that greedy search correctly answers a large fraction of in-distribution queries.
    Unfortunately, we show that this is not the case, even for $\gamma$ very close to $1$, and even for points in low-dimensional Euclidean space: we construct a family of 2D point sets and corresponding $\gamma$-almost navigable graphs so that greedy search fails on all but $1/(1-\gamma)$ in-distribution queries, no matter how large $n$ is. 

    Our work on the empirical side, however, is more positive. We show that, on standard ANN benchmarking datasets, $\gamma$-almost navigable graphs achieve similar search quality at much lower search and graph storage costs in comparison to fully navigable graphs. For example, for a fixed target accuracy of $90\%$ recall, we find that $\gamma$-almost navigable graphs require about 50\% less storage space and incur just 35-47\% of the search costs of navigable graphs. \Cref{sec:experiments} contains more details about the practical performance of almost navigable graphs and our experimental setup.

    % \textbf{\textcolor{blue}{Need to fill in.}}
    
    \section{Almost Navigable Graphs}
    In this section, we restate and prove our main theoretical results on the existence and construction of sparse \emph{almost} navigable graphs (\Cref{def:almost-navigability}). We also provide a construction showing that, unfortunately, almost navigable graphs do not enjoy the same worst-case greedy search guarantees as fully navigable graphs. Nevertheless, as shown in the next section, these graphs still appear empirically useful for greedy graph-based ANN search. 

        \subsection{Bounded Degree Almost Navigable Graphs}
        \label{sec:existance_proof}
        Our first result is that, for fixed $\gamma$, it is always possible to construct a $\gamma$-almost navigable graph whose edge count grows just linearly in the dataset size. This contrasts with the necessary $O(n^{3/2})$ edges required to construct a fully navigable graph in the worst-case \cite{DiwanGouMusco:2024}:

      \almostNavigableExistance*

     Our proof is constructive: we show how to build a $\gamma$-almost navigable graph with average out-degree $O(1/({1-\gamma}))$ using a technique similar to the navigable graph construction from \cite{ConwayDhulipalaFarach-Colton:2026}. We  make the construction efficient (near linear time) in \Cref{sec:linear_time}.

      Specifically, our construction leverages what \cite{ConwayDhulipalaFarach-Colton:2026} calls the ``power of cliques'' idea. We begin by partitioning our dataset $P$ into \emph{arbitrary} sets of size $\lceil 2/(1-\gamma) \rceil$. For each such set $S$, we add a clique to $G$, i.e. we add a directed edge from $u$ to $v$ for all $u,v\in S$. While doing so only adds $O(n/(1-\gamma))$ edges to the graph, we claim that after adding these cliques, the requirements of $\gamma$-almost navigability are satisfied for at least half of the points in $P$. Concretely:
        \begin{claim}\label{claim:clique-ownership}
            Let $P$ be set of $n$ points and suppose $G = (P,E)$ contains a clique connecting all nodes in some subset $S \subseteq P$. For all $v\in S$, define:
            \begin{align*}
             U_{S,v} := \{p \in P \mid v = \argmin_{x \in S} d(p, x)\}
            \end{align*}
            Then, for at least half the points in $S$,  have $|U_{S,v}| \leq {2n}/{|S|}$.
        \end{claim}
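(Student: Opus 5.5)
The plan is a counting (Markov-type) argument. First, the sets $U_{S,v}$ for $v \in S$ partition $P$. Every $p\in P$ has a unique nearest point in $S$, because distances are assumed unique as in \Cref{def:almost-navigability}; more precisely, $d(p,x)\neq d(p,y)$ for distinct $x,y\in S$ when $p\notin\{x,y\}$, and if $p\in S$ then $p$ is its own minimizer since $d(p,p)=0<d(p,x)$. So the $\argmin$ is well defined. If one prefers not to rely on uniqueness, fix any tie-breaking rule in the $\argmin$; the argument below only needs each $p$ to be assigned to exactly one $v\in S$. Either way, the sets $U_{S,v}$ are pairwise disjoint with union $P$, which gives
\begin{align*}
\sum_{v\in S} |U_{S,v}| = n.
\end{align*}

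Next, bound the number of ``bad'' points. Let $B=\{v\in S : |U_{S,v}| > 2n/|S|\}$. Since all the terms are nonnegative,
\begin{align*}
n = \sum_{v\in S}|U_{S,v}| \;\geq\; \sum_{v\in B}|U_{S,v}| \;>\; |B|\cdot \frac{2n}{|S|}
\end{align*}
whenever $B\neq\emptyset$. Rearranging gives $|B|<|S|/2$. Hence at least $|S|-|B| > |S|/2$ points $v\in S$ satisfy $|U_{S,v}|\leq 2n/|S|$, which is the claim. If $B=\emptyset$, the claim holds trivially.

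There is essentially no obstacle; this is a pigeonhole step. The only point needing care is the well-definedness of the $\argmin$, i.e.\ that the $U_{S,v}$ are disjoint and cover $P$. The clique hypothesis is not actually used in the counting itself. It matters only for the later use of the claim: for $v\in S$ and $r\notin U_{S,v}$, the point $x=\argmin_{x\in S}d(r,x)\neq v$ satisfies $d(x,r)<d(v,r)$, and the clique supplies the edge $(v,x)$. Thus $C_v\supseteq P\setminus U_{S,v}$, which is how the claim will feed into \Cref{lem:almost-navigable-existence}.
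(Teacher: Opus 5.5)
Your proof is correct and follows the paper's argument: the paper also uses that the ownership sets $U_{S,v}$ partition $P$, so their sizes sum to $n$. It then appeals to ``the median of nonnegative numbers is at most twice the mean,'' which is the same Markov-type count $|B| < |S|/2$ that you write out explicitly, with your version also handling well-definedness of the $\argmin$ and giving strictly more than half of $S$.
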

        Observe that, for all $v\in S$, for all $r \notin U_{S,v}$, there is some edge $(v,s) \in E$ such that $d(s,r) < d(v,r)$: simply take $s$ to be the point in $S$ that is closest to $r$. This means that the number of nodes in $P\setminus{v}$ that $v$ does not have an edge closer to is at most $|U_{S,v}|-1$.\footnote{$v$ itself is in $U_{S,v}$, so there are only $|U_{S,v}|-1$ points from $P\setminus v$ in $U_{S,v}$.} If we set $|S| = O(1/(1-\gamma))$, \Cref{claim:clique-ownership} ensures that $|U_{S,v}| \leq (1-\gamma)n$. We can conclude that at least half of the points in $S$ have edges closer to a $\gamma$ fraction of the remaining points in $P$, as required by \Cref{def:almost-navigability}.
        \begin{proof}[Proof of \Cref{claim:clique-ownership}]
            Observe that the ``ownership'' sets $U_{S,v}$ partition the dataset $P$. We thus have:
            \begin{align*}
                \frac{1}{|S|} \sum_{v \in S}|U_{S,v}| = \frac{|P|}{|S|} = \frac{n}{|S|}.
            \end{align*}
            Using the simple fact that for positive numbers, the median is at most twice the mean, the median sized set has size at most $2n/|S|$. Therefore, for at least half of $v\in S$ have $|U_{S,v}| \leq 2n/|S|$.
        \end{proof}
        \Cref{claim:clique-ownership} provides a simple way of ensuring that the majority of points in $P$ satisfy the requirements of $\gamma$-almost navigability. We obtain our main result by applying this idea iteratively to construct a full $\gamma$-almost navigable graph. After the first round of adding cliques to $G$, we set aside all points that satisfy the $\gamma$-almost navigability requirements. We then group any remaining points that do not satisfy the constraint into a new set of cliques. After doing so, we will again have that half of the remaining points are satisfied. Continuing in this way, we will have a $\gamma$-almost navigable graph after $O(\log n)$ rounds. We formally analyze this procedure below:
        \begin{proof}[Proof of \Cref{lem:almost-navigable-existence}]
            Specifically, consider the following procedure for constructing a graph $G = (P,E)$:
            \begin{enumerate}
                \item Arbitrarily partition $P$ into subsets $S_1, \dots, S_k$ of size $\lceil 2/(1-\gamma)\rceil$ each  and a set $\bar{S}$ with $< \lceil 2/(1-\gamma)\rceil$ left over points.
                \item Add a clique to each $S_i$.
                \item Let $\bar P$ contain all points in $\bar{S}$ and, for every $S_i$, all $v \in S_i$ such that $|U_{S_i,v}| > (1-\gamma) n$. 
                \item If $|\bar P| \geq \lceil 2/(1-\gamma)\rceil$, remove all out-edges from every $v \in \bar P$ and repeat from Step 1 with points in $\bar P$.
                \item Otherwise, add an edge from the $< \lceil 2/(1-\gamma)\rceil$ points in $\bar{P}$ to every point in $P$.
            \end{enumerate}
            We first establish correctness of the above procedure. First, observe that, by \Cref{claim:clique-ownership}, the procedure terminates, and indeed terminates after at most $O(\log n)$ rounds. In particular, by \Cref{claim:clique-ownership}, for each $S_i$, at least $|S_i|/2$ points $v\in S_i$ satisfy:
            \begin{align*}
                |U_{S_i,v}| \leq \frac{2n}{|S_i|} \leq (1-\gamma)n.
            \end{align*}
            Even counting the left over points in $\bar{S}$, it follows that the size of $\bar{P}$ shrinks by at least a factor of $1/4$ at every iteration of the procedure. 
            
            Having established convergence, consider any point $v\in P$. If $v$ remains in $\bar{P}$ until the last step of the algorithm, then $v$ clearly has edges satisfying the $\gamma$-almost navigability requirements of \Cref{def:almost-navigability}, as it is connected to all of $P$.

            If not, then at some point in the algorithm, $v$ was placed in a subset $S_i$ and it was determined that $|U_{S_i,v}| \leq (1-\gamma)n$. When this happens, an edge is added from $v$ to every point in $S_i$ and is never removed. $v$ thus has an edge closer to any $r\in P$ that is not in $U_{S_i,v}$. We conclude that $v$ \emph{does not} have an out edge closer to at most $|U_{S_i,v} \setminus \{v\}| \leq (1-\gamma)n - 1 \leq (1-\gamma)(n-1)$ points in $P$. The $\gamma$-almost navigability requirement of \Cref{def:almost-navigability} thus holds for $v$.

            Finally, we consider the total number of edges in $G$. We have at most $\left\lfloor \frac{2}{1-\gamma}\right\rfloor$ points in $\bar{P}$ at the end of the procedure, each with $n-1$ edges. Any other point $v$ that is not in $\bar{P}$ at the end of the procedure has exactly $\lceil 2/(1-\gamma)\rceil$ edges. So in total we have:
            \begin{align*}
            |E| < \left\lfloor \frac{2}{1-\gamma}\right\rfloor \cdot (n-1) + \left\lceil \frac{2}{1-\gamma}\right\rceil  \cdot n \leq \frac{4}{1-\gamma} \cdot n.\quad \quad\qedhere
            \end{align*}
        \end{proof}

        It is interesting to ask if \Cref{lem:almost-navigable-existence} can be improved. Notably, in the extreme case when $\gamma > 1 - 1/\sqrt{n}$, we have $O(n/(1-\gamma)) > O(\sqrt{n})$, so we can actually obtain a better bound on the degree required for $\gamma$-almost navigability by appealing to the existing upper bounds of $O(\sqrt{n})$ for the average degree of a \emph{fully} navigable graph. This suggests that it might be possible to obtain an improved dependence on $\gamma$. We note that a bound of the form $O({n}/\sqrt{1-\gamma})$ is not possible, however, as it can be checked that the lower bound of \cite{DiwanGouMusco:2024} for navigable graph construction extends to the setting when only a $1 - O(1/\sqrt{n})$ fraction of constraints need to be satisfied. I.e., a $(1-O(1/\sqrt{n}))$-almost navigable graph requires $\tilde{\Omega}(\sqrt{n})$ degree in the worst case, even for points in Euclidean space.

        \subsection{Almost Navigable Graphs in Linear Time}
        \label{sec:linear_time}
        We next establish that, in contrast to fully navigable graphs, sparse almost-navigable graphs can be constructed in sub-quadratic time (actually, in linear time). We restate the formal result below:
        
        \almostNavigableConstruction*

        This result is proven via an efficient implementation of the procedure described in the proof of \Cref{lem:almost-navigable-existence}. The most expensive operation in each round of that procedure is computing the ``ownership'' set of each point $v \in S_i$, $U_{S_i,v}$. Exactly computing this set requires pairwise distance computations between every $v$ with every other point in $P$, resulting in $O(n^2)$ distance computations.
        
        However, we do not actually need $U_{S_i,v}$ itself: we only need to know the \emph{size} of the set to determine if $v$ proceeds to the next round of the procedure. The size $\left| U_{S_i,v}\right|$ can be approximated more efficiently based on a random sample of data points from $P$. In particular, we argue that $\tilde{O}(1/(1-\gamma))$ samples from $P$ suffice by a standard Chernoff bound. The distance computations from each $v\in P$ to these samples dominates the runtime. 
        Our approach is formalized in \Cref{alg:sparse-almost-navigable} and analyzed in detail below:
            
            \begin{algorithm}[h]
                \caption{Almost navigable graph construction}\label{alg:sparse-almost-navigable}
                \begin{algorithmic}[1]
                    \vspace{-.1em}
                    \Require Set of $n$ points $P$, blackbox access to distance function $d: P \times P \to \R^{\geq0}$, $\gamma \in [0,1)$, failure probability $\delta \in (0,1)$.
                    \Ensure $\gamma$-navigable graph $G = (P,E)$ with probability $1-\delta$.
                    \algrule
                    \State Initialize $E \gets \emptyset$, $\Pi^{(0)} \gets P$, $i \gets 0$, $w \gets \frac{16\log(n/\delta)}{1-\gamma}$.
                    \While{$|\Pi^{(i)}| \geq \lceil 4/(1-\gamma) \rceil$}
                        \State Arbitrarily partition $\Pi^{(i)}$ into subsets $S_1, \dots, S_k$ of size $\left\lceil \frac{4}{1-\gamma} \right\rceil$ and a set $\bar{S}$ with $< \left\lceil \frac{4}{1-\gamma} \right\rceil$ left over points.
                        \State Initialize $\Pi^{(i+1)} \gets \bar{S}$
                        \State Draw a multiset, $W$, consisting of $w$ points from $P$ selected uniformly at random with replacement. 
                        \For {$j \in 1, \dots, k$}
                        \For {$v\in S_j$}
                            \State $\hat U_{S_j, v} \gets \{p \in W \mid v = \argmin_{x \in S_j} d(p,x)\}$
                            \If{$|\hat U_{S_j, v}| \leq (1-\gamma) w/2$}
                                \State Add $(v,u)$ to $E$ for every $u\in S_j$. 
                            \Else
                                \State Add $v$ to $\Pi^{(i+1)}$.
                            \EndIf
                            % \State Sort $v \in S_j$ by size of $U_v$ and add top half of points to $P^{(i+1)}$
                        \EndFor
                        \EndFor
                        \State $i \gets i + 1$
                    \EndWhile
                    
                    \State Connect any remaining points in $\Pi^{(i)}$ to all points in $P$.
                    \State \Return $G = (P, E)$
                \end{algorithmic}
            \end{algorithm}

            \begin{proof}[Proof of \Cref{thm:sparse-almost-navigable-construction}]
                We first analyze the correctness of \Cref{alg:sparse-almost-navigable} assuming that the algorithm terminates after $c\log n$ iterations of the main while loop for a fixed constant $c$. We then bound the number of iterations and the running time of the algorithm. 

                \paragraph{Correctness.} To establish correctness, we need to prove that every point $v$ that is \emph{not} added to $\Pi^{(i+1)}$ during round $i$ of the algorithm satisfies the constraints of $\gamma$-almost navigability. As in the proof \Cref{lem:almost-navigable-existence}, since any such $v$ is connected to all other $u \in S_j$ at the end of the round, it suffices to show that $|U_{S_j,v}| \leq (1-\gamma)n$, where 
                \begin{align*}
                U_{S_j,v} := \{p \in P \mid v = \argmin_{x \in S} d(p, x)\}
                \end{align*}
                Since $v$ in not added to $\Pi^{(i+1)}$ if $|\hat U_{S_j,v}| \leq (1-\gamma) w/2$, our task reduces to showing that, if $|U_{S_j,v}| > (1-\gamma)n$, then $|\hat U_{S_j, v}| > (1-\gamma) w/2$ with high probability. This follows from standard concentration bounds. 
                It particular, observe that:
                \begin{align*}
                |\hat U_{S_j, v}| = \sum_{p\in W} \mathbbm{1}[p \in U_{S_j,v}] = \sum_{i=1}^w x_i, 
                \end{align*}
                where each $x_i$ is a Bernoulli random variable with mean $|U_{S_j,v}|/n$. By a Chernoff bound, we thus have that:
                \begin{align*}
                \Pr\left[|\hat U_{S_j, v}| \leq \frac{w}{n}|U_{S_j,v}|/2\right] \leq e^{-w|U_{S_j,v}|/8n}.
                \end{align*}
                Setting $w = \frac{16\log(n/\delta)}{1-\gamma}$, we conclude that, if $|U_{S_j,v}| > (1-\gamma)n$, $|\hat U_{S_j, v}| > (1-\gamma) w/2$ with probability at least $1-\delta/n^2$. Taking a union bound over all $v$, over all $c\log n$ rounds of the algorithm, we conclude that, with probability at least $1-\delta$, every point $v$ with $|U_{S_j,v}| > (1-\gamma)n$ at any round $i$ was correctly added to $\Pi^{(i+1)}$. As discussed above, the graph $G$ is thus $\gamma$-almost navigable. 

                \paragraph{Runtime.} It remains to bound the 
                number of iterations of the main while loop, and consequently the runtime of the algorithm. To do so, observe that, for any $S_j$, at any iteration, $\sum_{v\in S_j} |\hat U_{S_j, v}| = n$. As in the proof of \Cref{claim:clique-ownership}, since $S_j$ has size $\left\lceil {4}/{(1-\gamma)}\right\rceil$, it follows that at least half of $v\in S_j$ have $|\hat U_{S_j, v}| \leq (1-\gamma) w/2$. Accordingly, at iteration of the while loop, at most half of the points in $S_j$ are added to $\Pi^{(i+1)}$. Conservatively accounting for $\bar{S}$, we conclude that:
                \begin{align*}
                |\Pi^{(i+1)}| &\leq \frac{3}{4}|\Pi^{i}| & &\text{ for all } i.
                \end{align*}
             Accordingly, the while loop terminates after at most $3 \log_2 n$ iterations. Moreover, the cost of \Cref{alg:sparse-almost-navigable} is dominated by the $|\Pi^{i}| \cdot w$ distance computations performed at each round $i$ of the while loop. Since the size of $|\Pi^{(i+1)}|$ is decreasing geometrically, we just conclude an overall runtime of:
             \begin{align*}
                O(n w T) = O\left(\frac{n\log(n/\delta)T}{1-\gamma} \right), 
             \end{align*}
             where $T$ is the cost of a single distance computation.
            \end{proof}
            
        \subsection{Performance Under Greedy Search}
        \begin{figure*}[tb]
                \centering
                \includegraphics[width=0.8\linewidth]{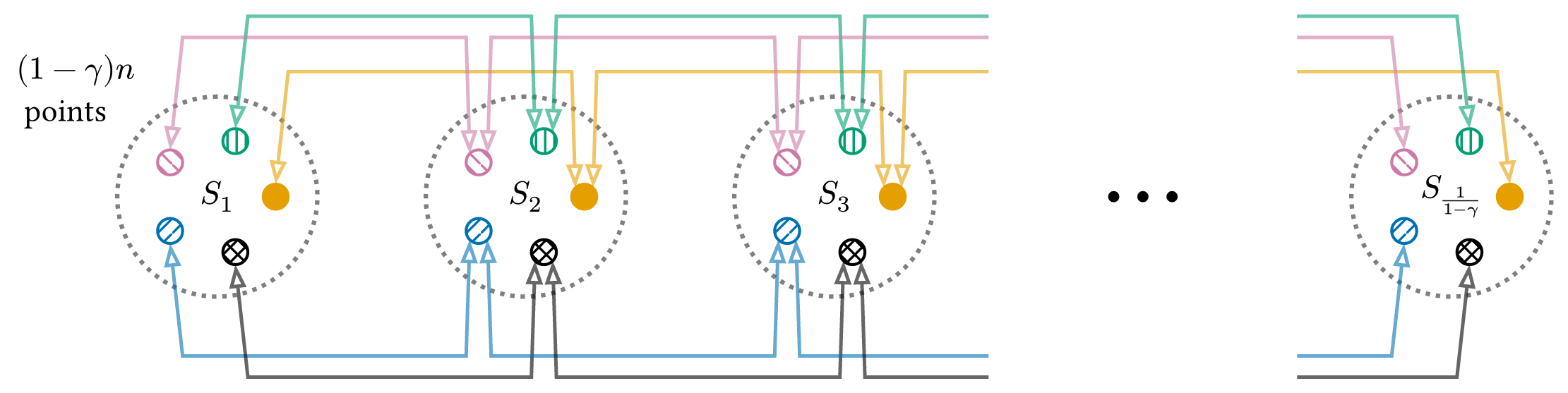}
                \caption{This figure illustrates the hard instance used to prove \Cref{lem:hard_instance}. Each cluster, $S_1, \ldots, S_k$, consists of $(1-\gamma)n$ points and the edges comprise a $\gamma$-almost navigable graph for the dataset. However, if we initialize greedy search at any point, $s$, it can only reach points with the same color as $s$. We will thus fail on all but $k = \frac{1}{1-\gamma}$ in-distribution queries.}
                \label{fig:hard-instance}
            \end{figure*}
        \label{sec:hard_instance}
        With \Cref{lem:almost-navigable-existence,thm:sparse-almost-navigable-construction} in place, we have established that almost navigable graphs require much lower degree and less construction time than their fully navigable counterparts. It remains to determine if these relaxed graphs remain valuable for ANN search. 

        In the next section, we address this question empirically, establishing overall positive results. However, we first provide some initial evidence that theoretically understanding the performance of almost navigable graphs maybe be even more difficult than understanding fully navigable graphs. In particular, the limited theoretical guarantees available for fully navigable graphs do not extend to almost navigable graphs, even approximately. 

        To be more concrete, recall that full navigability at least ensures correctness for \emph{in-distribution queries}. If we receive a query $q$ that happens to be in the dataset $P$, then  greedy search on a navigable graph $G$ will return $q$ itself. While this property alone does not imply good performance on out-of-distribution queries $q \notin P$, it is a natural starting point for understanding search accuracy.

        We might hope to show that $\gamma$-almost navigable graphs enjoy a similar guarantee. Perhaps we do not succeed \emph{for all} in-distribution queries, but for some large fraction that depends on $\gamma$. Unfortunately, we prove that this is not the case:
        \begin{lemma}
            \label{lem:hard_instance}
                For any $n$ and $\gamma \in [0,1)$, there is a set of $n$ points, $P$, in 2-dimensional Euclidean space with a $\gamma$-almost navigable graph $G$ such that greedy search on $G$, originating at any point $s \in P$, will not correctly return $q$ for $n - \frac{1}{1- \gamma}$ in-distribution queries $q \in P$.
        \end{lemma}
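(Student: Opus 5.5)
The construction follows the figure: split $P$ into $k = \frac{1}{1-\gamma}$ clusters $S_1,\dots,S_k$ of $(1-\gamma)n$ points each, assuming divisibility for the moment, and make the graph a disjoint union of within-cluster graphs. Then greedy search started at $s \in S_i$ can never leave $S_i$, so it cannot return any $q \notin S_i$. Within each cluster we should also make search fail on every query except one. The cleanest choice is to let each $S_i$ be an ordered chain of points on a ray emanating from a common hub region, e.g.\ $S_i = \{c_i + t\,u_i : t = 1,\dots,m\}$ with $m = (1-\gamma)n$ and distinct directions $u_i$ (with tiny perturbations for unique distances). Each point gets out-edges within its own cluster only; concretely, a clique on $S_i$.

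Verifying $\gamma$-almost navigability is immediate with the clique. For $p \in S_i$, every $r \in S_i\setminus\{p\}$ is in $C_p$ via the edge $(p,r)$. So $|C_p| \geq |S_i| - 1 + |\{r\notin S_i : \text{some } s\in S_i \text{ closer to } r\}|$. Even counting nothing outside $S_i$, we only get $(1-\gamma)n - 1$, which is too few. We instead need each $p$ to be "improvable" toward most points of the other clusters using only neighbors in $S_i$. For $r \notin S_i$, the point of $S_i$ closest to $r$ should not be $p$ for most $p$. A single point of $S_i$ can serve as the closest one for an entire other cluster, but at most $k-1$ points of $S_i$ own other clusters. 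So place the clusters far apart along different rays so that every other cluster $S_j$ is closest to one fixed endpoint $e_i \in S_i$, the point nearest the origin. Then every $p \in S_i\setminus\{e_i\}$ has $C_p = P\setminus\{p\}$. The single bad point $e_i$ gets edges to all of $P$, or to one point of each other cluster nearest to it, which keeps it within its own cluster... This exposes the tension: if $e_i$ has an edge leaving $S_i$, search can escape.

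The fix is to keep $e_i$'s edges inside $S_i$ and make $e_i$ lack improvement only toward $S_i$'s "owned" points. Reverse the geometry so that $e_i$ owns few outside points: put all clusters near each other, as interleaved concentric arcs or parallel short segments, so that ownership of outside points is spread across many points of $S_i$. A cleaner way to put it is that we need, for each $p\in S_i$, the number of outside points $r$ whose nearest point in $S_i$ is $p$ to be at most $(1-\gamma)n$ minus in-cluster slack. Since the $U_{S_i,v}$ partition $P$ into $|S_i| = (1-\gamma)n$ parts, equal ownership gives $|U_{S_i,v}| = n/((1-\gamma)n) = 1/(1-\gamma)$, tiny. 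So it suffices to interleave: place all $n$ points on a circle (or line) and color them cyclically $1,2,\dots,k$. Then each $v\in S_i$ owns only its roughly $k$ neighbors of other colors, so $|C_p| \geq n-1-k \geq \gamma(n-1)$ for $n$ large relative to $k^2$. For small $n$ the statement is vacuous-ish and can be handled separately, e.g.\ by taking $k$ clusters with one giving trivial counts. Greedy search from $s$ of color $i$ reaches only color-$i$ points, failing on the $n - |S_i|$ others. To get failures on $n - k$ queries we further orient the within-cluster edges as a directed path toward a sink, or use the clique but note that failures of size $n-k$ require fewer successes. So I would use, within each color class, edges from each point only to its two same-color neighbors on the line (plus the clique-like ownership check), and route greedy search so that from any start it converges to a unique local minimum per cluster.

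The main obstacle is reconciling three demands: few owned points per node for almost navigability, no cross-cluster edges, and only $k$ total successes independent of the start. My plan is to verify the $|C_p|$ count carefully for the interleaved layout and choose in-cluster edges (possibly a clique) so that the count of successful $(s,q)$ pairs matches the claimed $\frac{1}{1-\gamma}$. Rounding when $1/(1-\gamma)$ or $(1-\gamma)n$ is non-integral would be handled by unequal cluster sizes.
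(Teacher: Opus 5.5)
There is a genuine gap: your construction never reaches $n-\frac{1}{1-\gamma}$ failures. In every version, the connected components of $G$ are the $k=\frac{1}{1-\gamma}$ clusters or color classes, each of size $(1-\gamma)n$. Disconnection alone therefore forces failure on only $n-(1-\gamma)n=\gamma n$ queries. That is far short of $n-k$ unless $\gamma$ is near $1$; for $\gamma=1/2$ it is $n/2$ versus $n-2$.

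You notice this and propose making greedy search also fail inside each class, but none of your candidate edge sets does so:
\begin{itemize}
\item With a clique, every same-color $q$ is one hop from $s$.
\item With edges to the two same-color neighbors on the line, greedy search moves monotonically along the line toward any same-color $q$ and finds it.
\item With a single out-edge per node (a directed path toward a sink), each node covers only the points on one side of a bisector, roughly half of $P$, which violates the constraint for $\gamma>1/2$.
\end{itemize}
So the plan to route search to a unique local minimum per cluster is not backed by any construction, and the number of successful queries from a start $s$ stays at $(1-\gamma)n$. Your almost-navigability count also needs $n\gtrsim k^2$, and you leave small $n$ unhandled.

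The missing idea is to swap the roles of the two parameters. Use $(1-\gamma)n$ colors, so that each color class, and hence each connected component, has only $k$ points. Place one point of each class in each of $k$ tight balls $S_1,\dots,S_k$ of radius $\epsilon$, arranged along a line with center spacing exceeding $4\epsilon$. Give $p^x_i$ edges to $p^x_{i-1}$ and $p^x_{i+1}$.

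Almost navigability then follows from geometry rather than from ownership counts. By the triangle inequality, $p^x_{i+1}$ is strictly closer than $p^x_i$ to every point of $S_j$ with $j>i$, and $p^x_{i-1}$ likewise for $j<i$. So the only points $p^x_i$ cannot improve toward are the $(1-\gamma)n-1$ other points of its own ball. This gives $|C_p|\ge\gamma n\ge\gamma(n-1)$ for every $n$.

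Since each component has just $k$ points, greedy search from any $s$ can answer at most $k$ queries correctly. The point is that the uncovered set of each node may consist entirely of points near it, so components should be small and spread across the clusters rather than equal to the clusters. Your cyclic-coloring idea on a circle can also be made to work once you use $(1-\gamma)n$ colors instead of $k$, up to tie-breaking at arc midpoints. As written, though, the color count is inverted.
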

            \begin{proof}
                We prove the result using a hard instance that is illustrated in \Cref{fig:hard-instance}. 
                To construct $P$, we place balls $S_1, \dots, S_k$ of radius $\epsilon$ on a straight line at intervals of length greater than $2\epsilon$ and then place $(1 - \gamma) \cdot n$ points in each ball $S_i$. Here, $k = \frac{1}{1- \gamma}$. Next, pick $(1 - \gamma) \cdot n$ colors $\mathcal{X} = \{x_1, \dots, x_{(1 - \gamma) n}\}$ and arbitrarily assign colors so that no two points in the same ball share the same color. From here on, we let $p^{x}_{i}$ denote the point in cluster $S_i$ with color $x \in \mathcal{X}$.

                For each $p^{x}_{i}$, add an edge to $p^{x}_{i-1}$ and $p^{x}_{i+1}$. For $i = 1$ or $i = k$, we just add edges to $p^{x}_{2}$ and $p^{x}_{k-1}$, respectively. In other words, $G$ connects all points of like color in adjacent clusters. 

                By triangle inequality, is not hard to see that $G$ is $\gamma$-almost navigable. The two edges out of each $p^{x}_{i}$ allow it to get closer to all points in other clusters. Because there are only $(1-\gamma)n$ points in $S_i$, $p^{x}_{i}$ thus satisfies the $\gamma$-almost navigability requirement.

                On the other hand, notice that  $G$ is disconnected and there are $(1 - \gamma)n$ separate connected components. For any query $q \in P$, greedy search will only return $q$ if its color matches that of the starting node $s$. There are only $n - k = n - \frac{1}{1 - \gamma}$ such points.
            \end{proof}

%                 \textcolor{blue}{\textbf{
% Due June 12th anytime on earth. 8 pages not including citations and appendix.
%     }}
%     Workshop link: \url{https://vecdb-ws.github.io/vldb2026/cfp.html}

    \section{Experiments}
    \label{sec:experiments}        
        We conclude by presenting experimental evaluations of almost navigable graphs on real world datasets, comparing them to their fully navigable counterparts. We demonstrate practical improvements in graph sparsity and in search efficiency when using beam search, the standard back-tracking variant of plain greedy search used in almost all practical graph-based ANN systems. All our experiments use standard Euclidean distance as a distance metric.
        
        \subsection{Graph Density Experiments}
        We begin by comparing the density of navigable and almost navigable graphs on on 8 standard datasets, MNIST, Fashion MNIST, COCO-i2i, Glove25, Microsoft SPACEV1B, Yandex DEEP, BIGANN, and Facebook SimSearchNet \cite{AumullerBernhardssonFaithfull:2020, SimhadriWilliamsAumuller:2022}. Details of these dataset are included in \Cref{sec:appendix-tables-plots}, \Cref{tab:coverage-stats}. 
        
        \paragraph{Experimental Setup.}
        Ideally, we would like to understand the minimum number of edges required to construct a $\gamma$-almost navigable graph for various choices of $\gamma < 1$ and compare to the $\gamma = 1$ (fully navigable) case. While minimizing sparsity is computationally intractable, as established in \cite{ConwayDhulipalaFarach-Colton:2026,KhannaPadakiWaingarten:2025} the problem of constructing a navigable graph amounts to solving a set cover instance for each node in the dataset. Likewise, $\gamma$-almost navigable graph construction involves $n$ \emph{partial} set cover problems. We can thus obtain graphs with sparsity within $\log n$ of optimal by using the standard greedy set cover algorithm to select out edges for each node \cite{Kearns1990}. 

        Even greedy set cover is computationally expensive, however, requiring $O(n^2)$ time for each node. Instead, we implement the popular ``robust prune'' algorithm from Microsoft's DiskANN library \cite{SubramanyaDevvritKadekodi:2019}. Pseudocode is provided in \Cref   {alg:robust-prune}. For a given node $v$, the algorithm first adds an edge to $v$'s nearest neighbor in $P$ and eliminates all nodes in the dataset that it now has an edge closer to. It then adds and edge to its nearest neighbor among the remaining points, continuing until it has eliminated a $\gamma$ fraction of the dataset. Robust prune requires just $O(n\log n)$ time per node and, while it does not have any guarantees, the order of edge additions seems competitive with that of greedy set cover: in initial experiments on small datasets, we only noticed small (1-2 edge) improvements in average degree when running the more expensive greedy method. 

        For the datasets with more than 1 billion points -- Yandex DEEP, BIGANN, Facebook SimSearchNet++, and Microsoft SPACEV1B -- we could not afford to build full graphs, even using robust prune, so we instead construct neighborhoods that satisfy the $\gamma$-almost navigability requirement for 10,000 randomly selected nodes. We report average degree and other statistics over that subset, noting that the average degree is an unbiased estimate of the average degree that would be obtained by the same algorithm run on the full dataset. 

                \begin{figure}[t]
                    \centering
                    \includegraphics[width=0.9\linewidth]{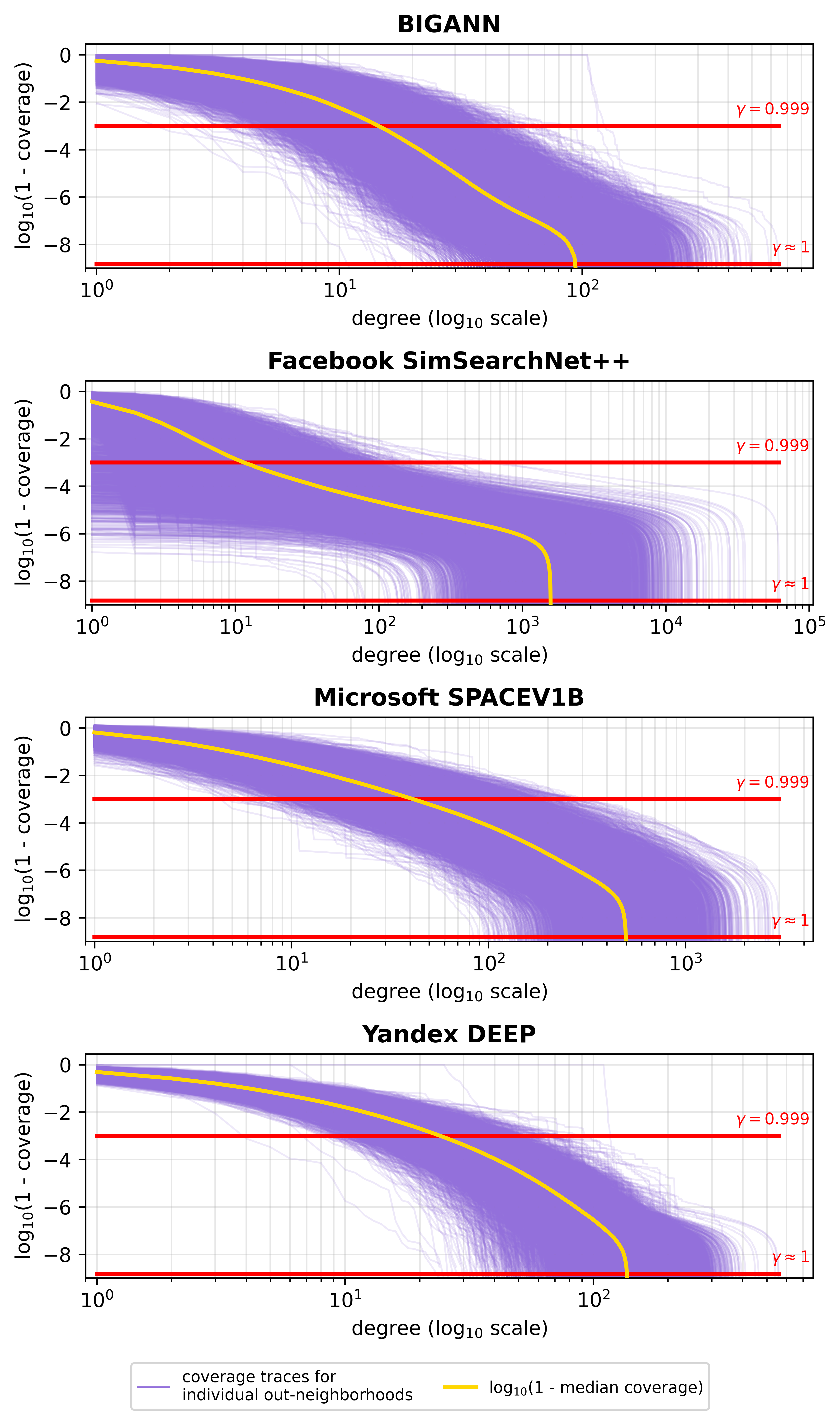}
                    \caption{In these plots, the vertical axis shows the $\log_{10}$ of the fraction of uncovered points plotted against vertex out-degree on the horizontal axis. Each purple line traces graph coverage for a single point as edges are added to its out-neighborhood, and the yellow line describes the median trend. The red horizontal lines mark when coverage $\gamma = 0.999$ and $\gamma \approx 1$ are achieved. The shape of the graphs shows that just the initial few edges account for covering most of the graph, while all subsequent edges cover progressively smaller portions. In fact, the median out-neighborhood uses almost a full order of magnitude fewer edges to achieve coverage = 0.999 compared to the edges needed for full navigability.}
                    \label{fig:degree-vs-coverage}
                \end{figure}

\begin{table}[t]
\centering
\small
\setlength{\tabcolsep}{2pt}
\begin{tabular}{lccccccc}
\toprule
\multirow{2}{*}{\textbf{Dataset}} & \multirow{2}{*}{\textbf{\shortstack{\# of \\ Points}}} & \multirow{2}{*}{\textbf{Dim.}} & \multirow{2}{*}{$\gamma$} & \multicolumn{4}{c}{\textbf{Out-Degree}} \\
\cmidrule(lr){5-8}
& & & & \textbf{Mean} & \textbf{Median} & \textbf{Min} & \textbf{Max} \\
\midrule
\multirow{3}{*}{\shortstack[l]{Fashion\\MNIST \cite{AumullerBernhardssonFaithfull:2020}}} & \multirow{3}{*}{60K} & \multirow{3}{*}{784} & 1 (navigable)\,\,\, & 13.6 & 12 & 1 & 108 \\
\cmidrule(l){4-8}
 &  &  & 0.995 & 6.6 & 6 & 1 & 70 \\
 &  &  & 0.950 & 4.3 & 4 & 1 & 39 \\
\cmidrule(l){1-8}
\multirow{3}{*}{MNIST \cite{AumullerBernhardssonFaithfull:2020}} & \multirow{3}{*}{60K} & \multirow{3}{*}{784} & 1 (navigable)\,\,\, & 19.6 & 19 & 2 & 67 \\
\cmidrule(l){4-8}
 &  &  & 0.995 & 10.3 & 10 & 2 & 32 \\
 &  &  & 0.950 & 5.8 & 5 & 2 & 20 \\
\cmidrule(l){1-8}
\multirow{3}{*}{COCO-i2i \cite{AumullerBernhardssonFaithfull:2020}} & \multirow{3}{*}{113K} & \multirow{3}{*}{512} & 1 (navigable)\,\,\, & 28.6 & 27 & 2 & 131 \\
\cmidrule(l){4-8}
 &  &  & 0.995 & 13.5 & 12 & 2 & 75 \\
 &  &  & 0.950 & 6.7 & 6 & 2 & 52 \\
\cmidrule(l){1-8}
\multirow{3}{*}{Glove25 \cite{AumullerBernhardssonFaithfull:2020}} & \multirow{3}{*}{1.2M} & \multirow{3}{*}{25} & 1 (navigable)\,\,\, & 50.4 & 50 & 1 & 141 \\
\cmidrule(l){4-8}
 &  &  & 0.995 & 8.3 & 7 & 1 & 71 \\
 &  &  & 0.950 & 4.2 & 4 & 1 & 44 \\
\cmidrule(l){1-8}
\multirow{3}{*}{\shortstack[l]{Yandex\\DEEP \cite{SimhadriWilliamsAumuller:2022}}} & \multirow{3}{*}{1B} & \multirow{3}{*}{96} & 1 (navigable)\,\,\, & 144.1 & 138 & 25 & 560 \\
\cmidrule(l){4-8}
 &  &  & 0.995 & 16.3 & 16 & 4 & 115 \\
 &  &  & 0.950 & 6.8 & 6 & 2 & 113 \\
\cmidrule(l){1-8}
\multirow{3}{*}{BIGANN \cite{SimhadriWilliamsAumuller:2022}} & \multirow{3}{*}{1B} & \multirow{3}{*}{128} & 1 (navigable)\,\,\, & 105.5 & 95 & 12 & 647 \\
\cmidrule(l){4-8}
 &  &  & 0.995 & 11.8 & 11 & 2 & 117 \\
 &  &  & 0.950 & 6.3 & 6 & 1 & 112 \\
\cmidrule(l){1-8}
\multirow{3}{*}{\shortstack[l]{Facebook\\SimSearch-\\Net++ \cite{SimhadriWilliamsAumuller:2022}}} & \multirow{3}{*}{1B} & \multirow{3}{*}{256} & 1 (navigable)\,\,\, & 2077.6 & 1577 & 44 & 61399 \\
\cmidrule(l){4-8}
 &  &  & 0.995 & 8.4 & 7 & 1 & 133 \\
 &  &  & 0.950 & 3.9 & 3 & 1 & 33 \\
\cmidrule(l){1-8}
\multirow{3}{*}{\shortstack[l]{Microsoft\\SPACEV1B \cite{AumullerBernhardssonFaithfull:2020}}} & \multirow{3}{*}{1.4B} & \multirow{3}{*}{100} & 1 (navigable)\,\,\, & 554.8 & 500 & 85 & 2984 \\
\cmidrule(l){4-8}
 &  &  & 0.995 & 22.1 & 19 & 3 & 141 \\
 &  &  & 0.950 & 7.4 & 7 & 1 & 76 \\
\bottomrule
\end{tabular}
\smallskip
\caption{This table compares degree statistics for $\gamma$-almost navigable graphs to those for fully navigable graphs ($\gamma = 1$).
}
\label{tab:coverage-stats}
\end{table}

               \paragraph{Results.}
    \Cref{tab:coverage-stats} compares degree statistics between almost navigable graphs ($\gamma < 1$) with those of fully navigable graphs ($\gamma = 1$). Across all the datasets, we observe a marked improvement in graph sparsity, even for very high values of $\gamma$. For $\gamma = 0.995$, which means $99.5\%$ of constraints are satisfied, we saw roughly a 50\% reduction in the mean and median out-degree on the smaller datasets with less than 1 million points, like MNIST, Fashion-MNIST, and COCO-i2i. For larger datasets, there was an even greater reduction, with Glove25, SPACEV1B, Yandex DEEP, BIGANN, and Facebook SimSearchNet++ needing fewer than 20\% edges of the entire navigable neighborhood to achieve 99.5\% coverage. Remarkably, we only needed $\sim 9$ edges on average to achieved 99.5\% navigability for Facebook SimSearchNet++, as opposed to $\sim 2077$ for full navigability. 
    
    For other choices of $\gamma$, in \Cref{fig:degree-vs-coverage}, we compute how graph coverage changes as edges are added to each point's out-neighborhood for each of the billion point datasets. We observe that only the initial few edges picked during graph construction account for satisfying most of the navigability constraints, showing that almost navigability can be achieved using orders of magnitude fewer edges than full navigability. The median number of edges required to achieve $\gamma = 0.999$ almost navigability is 2 orders of magnitude less than the median number of edges required for full navigability for Facebook SimSearchNet++ and 1 order of magnitude for BIGANN, Yandex DEEP, and Microsoft SPACEV1B.

            \begin{figure*}[t]
                    \centering
                    \includegraphics[width=0.95\linewidth]{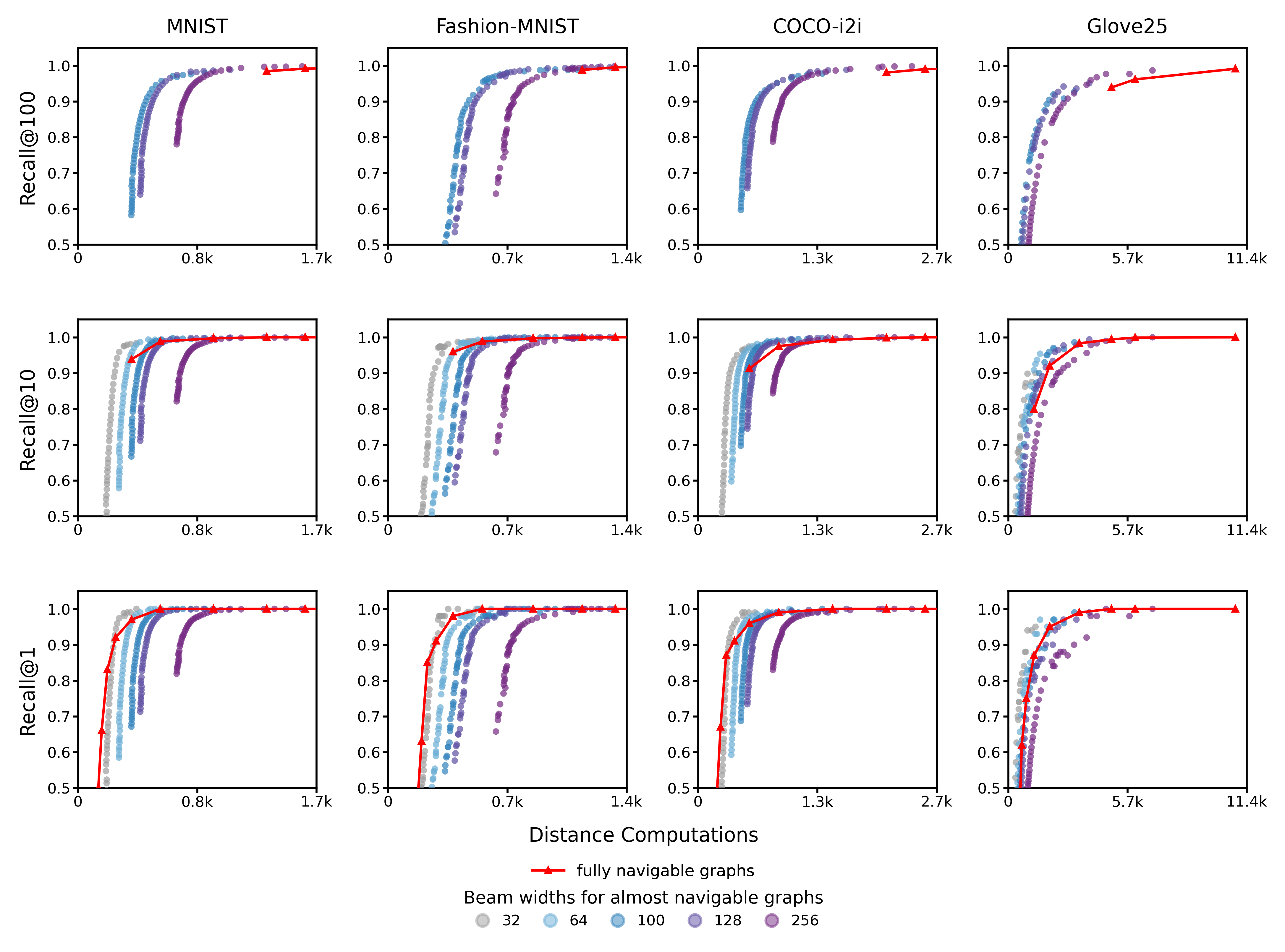}
                    \caption{These plots show average recall$@k$ vs. average distance computations for various choices of navigability parameter, $\gamma$, and beam width parameter, $b$. The red solid line represents the recall curve for fully navigable graphs for various choices of beam width. We only plot points corresponding to beam widths $\geq k$, since having a lower beam width places a limit on the maximum achievable recall. In these plots, points up and to the left correspond to better performance: higher recall and fewer distance computations. As can be seen, for all data sets, there are almost navigable graphs which, for an appropriate choice of beam width, match or beat the performance of search on the fully navigable graph.}
                    \label{fig:recall-v-dc}
                \end{figure*}
            
            \subsection{Retrieval Experiments}
            In addition to graph sparsity, we evaluate nearest neighbor query performance on the smaller of the eight datasets studied above (those for which we were able to construct full search graphs. In particular, we consider the standard recall@k metric: we use the graphs to obtain a set of $k$-nearest neighbors for a given query $q$, then compute the fraction of these answers that are true top-$k$ neighbors of $q$. 
            
            \paragraph{Experimental Setup.}
            To obtain $k$-nearest neighbors, we run \emph{beam search} on the constructed almost navigable and fully navigable graphs Beam search has become the defacto back-tracking variant of greedy search used in most graph-based methods \cite{MalkovYashunin:2020, SubramanyaDevvritKadekodi:2019}. Beam search maintains a list of the closest points found so far for a given query $q$. At each step, it picks the closest point from that list, computing the distance between $q$ and all of that point's out-neighbors. It then updates the list, removing the ``explored'' point. When run with beam width parameter $b$, search terminates when the best node in the list is no closer to $q$ than $b$ points explored previously. Increasing $b$ increases how long the search executes for, so provides a natural way to trade-off between recall and runtime. For a more detailed discussion of the method and full pseudocode, see \cite{Al-JazzaziDiwanGou:2025}.
            
            Since the runtime of graph-based search is dominated by the cost of computing distances between the query $q$ and candidate nearest neighbors, we report the total number of distance computations as our main performance metric. This metric is less noisy and system dependent than, e.g., wall-clock runtime. Concretely, for each graph, for each dataset, we run search using beam widths $b \in \{1, 2, 4, 16, 32, 64, 100, 128, 256\}$. For each beam width we compute the average  recall$@k$, where $k \in \{1, 10, 100\}$, and the average number of distance computations over all dataset test queries.

            \begin{table*}[tb]
                \centering
                \small
                \begin{tabular}{lc>{\centering\arraybackslash}p{1.6cm}>{\centering\arraybackslash}p{1.6cm}>{\centering\arraybackslash}p{1.6cm}>{\centering\arraybackslash}p{1.6cm}>{\centering\arraybackslash}p{1.6cm}>{\centering\arraybackslash}p{1.6cm}}
                \toprule
                \multirow{2}{*}{\textbf{Dataset}} & \multirow{2}{*}{\textbf{Target Recall}$@k$} & \multicolumn{3}{c}{\textbf{Ratio of Distance Computations}} & \multicolumn{3}{c}{\textbf{Ratio of Average Degree}} \\
                \cmidrule(r){3-5} \cmidrule(l){6-8}
                & & \textbf{$k = 1$} & \textbf{$k = 10$} & \textbf{$k = 100$} & \textbf{$k = 1$} & \textbf{$k = 10$} & \textbf{$k = 100$} \\
                \midrule
                \multirow{4}{*}{MNIST} & 0.90 & 0.69 & 0.61 & 0.39 & 0.53 & 0.55 & 0.33 \\
                 & 0.95 & 0.66 & 0.67 & 0.44 & 0.55 & 0.41 & 0.41 \\
                 & 0.97 & 0.62 & 0.62 & 0.51 & 0.58 & 0.53 & 0.53 \\
                 & 0.99 & 0.68 & 0.70 & 0.69 & 0.58 & 0.53 & 0.41 \\
                \midrule
                \multirow{4}{*}{Fashion-MNIST} & 0.90 & 0.69 & 0.62 & 0.44 & 0.55 & 0.50 & 0.38 \\
                 & 0.95 & 0.66 & 0.68 & 0.50 & 0.51 & 0.64 & 0.50 \\
                 & 0.97 & 0.68 & 0.65 & 0.56 & 0.64 & 0.50 & 0.57 \\
                 & 0.99 & 0.63 & 0.79 & 0.70 & 0.72 & 0.77 & 0.64 \\
                \midrule
                \multirow{4}{*}{COCO-i2i} & 0.90 & 0.74 & 0.59 & 0.33 & 0.32 & 0.50 & 0.27 \\
                 & 0.95 & 0.63 & 0.56 & 0.42 & 0.50 & 0.39 & 0.32 \\
                 & 0.97 & 0.56 & 0.65 & 0.52 & 0.58 & 0.32 & 0.53 \\
                 & 0.99 & 0.55 & 0.61 & 0.67 & 0.50 & 0.39 & 0.39 \\
                \midrule
                \multirow{4}{*}{Glove25} & 0.90 & 0.53 & 0.53 & 0.41 & 0.50 & 0.41 & 0.37 \\
                 & 0.95 & 0.66 & 0.56 & 0.70 & 0.60 & 0.41 & 0.60 \\
                 & 0.97 & 0.57 & 0.70 & 0.59 & 0.41 & 0.41 & 0.37 \\
                 & 0.99 & 0.93 & 0.81 & 0.65 & 0.60 & 0.60 & 0.60 \\
                \midrule
                \textbf{Average} &  & \textbf{0.65} & \textbf{0.65} & \textbf{0.53} & \textbf{0.54} & \textbf{0.49} & \textbf{0.45} \\
                \bottomrule
                \end{tabular}
                \smallskip
                \caption{
                    This table compares the ratio of computational cost, measured as average distance computations (DC) and average degree, of achieving a target recall$@k$ for $\gamma$-almost navigable graphs (when $\gamma < 1$) relative to fully navigable graphs ($\gamma = 1$). I.e., for some target recall$@k$, the table shows $\frac{\text{avg. DC for $\gamma < 1$}}{\text{avg. DC for $\gamma = 1$}}$ and $\frac{\text{avg. degree for $\gamma < 1$}}{\text{avg. degree for $\gamma = 1$}}$. Values $< 1$ mean the almost navigable graph had lower cost; values $\geq 1$ mean otherwise. For every target recall, $\gamma$-almost navigable graphs require fewer distance computations ($\sim 35 - 47$\% reduction) at smaller graph sizes ($\sim 46-55$\% reduction) on average. The data points for this table were chosen by interpolating the curves in \Cref{fig:recall-v-dc}.
                }
                \label{tab:target-recall-ratio}
            \end{table*}

            \paragraph{Results.}
                \Cref{fig:recall-v-dc} visualizes the resulting cost/performance tradeoffs. Each point on each plot corresponds to a particular choice of the beam width $b$ (indicated by the point's color) and a value of $\gamma$ between $0.8$ and $1$ that was used to construct the search graph. For each choice of $b$, we can trace a line of points with increasing recall and distance computations, which corresponds to increasing $\gamma$, i.e., increasing the graph coverage. Observe that in each of the plots, there are almost navigable graph search configurations that lie above the recall-distance-computation curve of fully navigable graphs. The improvement is most striking for recall@100, where almost navigable graphs achieve higher or similar recall for much fewer distance computations. 

                For a more detailed analysis, \Cref{tab:target-recall-ratio} describes the efficiency trade-off to achieve certain target recalls. More specifically, for a chosen target recall, we find beam width and $\gamma$ settings for almost navigable graphs that achieve that recall and compare them to their fully navigable counterparts. The table displays the fraction of distance computations performed during search on almost navigable graphs relative to the distance computations performed on navigable graphs to achieve the target recall at recall$@k$. Additionally, \Cref{tab:target-recall-ratio} includes the ratio of the average out-degrees of almost navigable graphs to the average out-degree of navigable graphs. In both of these statistics, a score lower than 1 indicates greater search efficiency for almost navigable graphs. 

                Across all datasets, almost navigable graphs are more efficient at achieving their recall targets in terms of the number of distance computations required to perform search on them and the amount of space required to store them (graph sparsity). On average, $\gamma$-almost navigable graphs used 47\% fewer distance computations and 55\% less space than fully navigable graphs to achieve a given target recall$@100$. We see similar efficiency gains for recall$@1$ and recall$@10$, with an average reduction of 35\% in distance computations and $\sim$ 50\% reduction in graph size. \Cref{tab:target-recall-dist} and \Cref{tab:target-recall-deg} contain a more detailed report of the distance computations and average graph degree statistics we compiled for \Cref{tab:target-recall-ratio}. 
                
                \paragraph{On choosing $\gamma$.} 
                    The dataset's size and structure appear to be important in determining the right choice of $\gamma$. In these preliminary evaluations, $\gamma = 0.995$ is sufficient for COCO-i2i and MNIST to achieve recall$@k$ > 0.97 for all $k$, but not for Glove25 and Fashion-MNIST; those required $\gamma = 0.9995$ or higher to achieve the same performance. Interestingly, as reported in \Cref{tab:target-recall-ratio}, these graphs are still sparse and efficient relative to their fully navigable counterparts. See Appendix \ref{sec:appendix-tables-plots} for more detailed tables containing graph degree statistics.

    \section{Discussion}
    In this work, we introduce a natural relaxation of the navigability property, which has become central in work on graph-based approximate nearest neighbor search. Our ``almost navigable graphs'' have a linear number of edges, and can be constructed in near linear time, both polynomial improvements over full navigability. Moreover, initial experimental results appear promising. Looking towards next steps, we hope to provide a more thorough empirical evaluation for search on almost navigable graphs. On the theoretical side, it would be valuable to explore other relaxations of navigability and related concepts like $\alpha$-shortcut reachability. We are particularly interested in relaxations which, unlike $\gamma$-almost navigability, at least partially preserve some of the theoretical properties of navigable graphs. 

    \section{Acknowledgments}
    Pratyush Avi was partially supported by a GAANN fellowship from the US Department of Education. 
    
    \bibliographystyle{ACM-Reference-Format}
    \bibliography{refs}

\onecolumn
\appendix
    \section{Algorithms}\label{sec:appendix-algs}
        In this section, we present pseudocode for the various algorithms used for the experimental analysis presented in this paper. The robust prune and set cover algorithms described here largely follow their standard descriptions, with the only change being the stopping criteria. Constructing a fully navigable graph would require the algorithm to keep picking edges until there is no point left uncovered. In \Cref{line:robust-prune-condition} of the robust prune algorithm and \Cref{line:set-cover-condition} of the greedy set cover algorithm, we instead stop when the number of uncovered points falls below $(1-\gamma)n$. Notably, this is the same as regular navigability when $\gamma = 1$.
        \begin{algorithm}\label{alg:robust-prune}
            \caption{Robust Prune w/ Early Stopping}
            \begin{algorithmic}[1]
                \vspace{-.1em}
                \Require Set of $n$ points $P$, blackbox access to distance function $d: P \times P \to R^{\geq 0}$, $\gamma \in [0, 1]$
                \Ensure $\gamma$-navigable graph $G = (P, E)$
                \algrule
                \State $E \gets \{\}$
                \For{each $p \in P$}
                    \State $U \gets P \setminus\{p\}$ \label{line:robust-prune-condition}
                    \While{$|U| > (1 - \gamma)n$}
                        \State $v \gets \argmin_{x \in U} d(p,x)$
                        \State Add edge $(p,v)$ to $E$
                        \State $C \gets \{y \in U \mid d(v, y) < d(p, y)\}$
                        \State $U \gets U \setminus \{C\}$, remove covered points from $U$
                    \EndWhile
                \EndFor
                \State \Return $G = (P, E)$
            \end{algorithmic}
        \end{algorithm}

        \begin{algorithm}\label{alg:greedy-set-cover}
            \caption{Partial Greedy Set Cover}
            \begin{algorithmic}[1]
                \vspace{-.1em}
                \Require Set of $n$ points $P$, blackbox access to distance function $d: P \times P \to R^{\geq 0}$, $\gamma \in [0, 1]$
                \Ensure $\gamma$-navigable graph $G = (P, E)$
                \algrule
                \State $E \gets \{\}$
                \For{each $p \in P$}
                    \State $U \gets P \setminus\{p\}$
                    \State For each $x \in U$, define $\mathcal{S}_{p \to x} := \{y \in P \mid d(x, y) < d(p, y)\}$
                    \While{$|U| > (1 - \gamma)n$} \label{line:set-cover-condition}
                        \State $v \gets \argmin_{x \in U} |\mathcal{S}_{p \to x} \cap U|$
                        \State Add edge $(p,v)$ to $E$
                        \State $U \gets U \setminus \{\mathcal{S}_{p \to x}\}$, remove covered points from $U$
                    \EndWhile
                \EndFor
                \State \Return $G = (P, E)$
            \end{algorithmic}
        \end{algorithm}

        % \begin{algorithm}[H]\label{alg:beam-search}
        %     \caption{Beam Search \cite{Al-JazzaziDiwanGou:2025}}
        %     \begin{algorithmic}[1]
        %         \vspace{-.1em}
        %         \Require Graph $G = (P, E)$, start node $s$, distance function $d: P \times P \to R^{\geq 0}$, beam width $b$, query $q$, target number of nearest-neighbors $k$
        %         \Ensure A set of $k$ nodes, $B \subset P$
        %         \algrule
        %         \State $D \gets \{s\}$ \Comment{Set of discovered points}
        %         \State $C \gets \{(s, d(q,s)\}$ \Comment{Min-heap of candidates}
        %         \State $B \gets \{(s, d(q,s)\}$ \Comment{Max-heap of best results}
        %         \While{$C$ is not empty}   
        %             \State $(x, d(q, x) \gets \mathrm{heappop}(C)$
        %             \If{$|B| = k$ and $\mathrm{top}(B) \leq d(q, x)$}
        %                 \State \textbf{break}
        %             \EndIf
        %             \For{all $y \in N_{G}(x)$}
        %                 \If{$y \notin D$}
        %                     \State $D \gets D \cup \{y\}$
        %                     \If{$|B| < k$ or $d(q, y) < \mathrm{top}(B)$}
        %                         \State $\mathrm{heappush}(B, (y, d(q, y)))$
        %                         \State $\mathrm{heappush}(C, (y, d(q, y)))$
        %                         \If{$|B| = k + 1$}
        %                             \State $\mathrm{heappop}(B)$
        %                         \EndIf
        %                     \EndIf
        %                 \EndIf
        %             \EndFor
        %         \EndWhile
        %     \end{algorithmic}
        % \end{algorithm}

    \newpage
    \section{Additional Tables and Plots}\label{sec:appendix-tables-plots}
        The following tables include more details about the characteristics of almost navigable graphs on real world datasets. \Cref{tab:coverage-stats-full} expands on \Cref{tab:coverage-stats} and includes out-degree and in-degree statistics for a larger range of coverage values. \Cref{tab:target-recall-dist} and \Cref{tab:target-recall-deg}, expand on \Cref{tab:target-recall-ratio} and include details about the exact distance computations, average graph degree, and value of $\gamma$ that achieve the desired target recalls. 

\begin{table}[h]
\centering
\setlength{\tabcolsep}{8pt}
\begin{tabular}{lcccccccccc}
\toprule
\multirow{2}{*}{Dataset} & \multirow{2}{*}{Points} & \multirow{2}{*}{Dim.} & \multirow{2}{*}{$\gamma$} & \multicolumn{4}{c}{Out-degree} & \multicolumn{3}{c}{In-degree} \\
\cmidrule(lr){5-8} \cmidrule(lr){9-11}
& & & & Mean & Median & Min & Max & Median & Min & Max \\
\midrule
\multirow{5}{*}{Fashion-MNIST} & \multirow{5}{*}{60,000} & \multirow{5}{*}{784} & 1.0000 & 13.55 & 12.0 & 1 & 108 & 12.0 & 1 & 117 \\
 &  &  & 0.9999 & 12.13 & 11.0 & 1 & 107 & 11.0 & 1 & 70 \\
 &  &  & 0.9995 & 9.72 & 8.0 & 1 & 106 & 9.0 & 1 & 55 \\
 &  &  & 0.9950 & 6.59 & 6.0 & 1 & 70 & 6.0 & 1 & 54 \\
 &  &  & 0.9500 & 4.33 & 4.0 & 1 & 39 & 4.0 & 1 & 51 \\
\midrule
\multirow{3}{*}{MNIST} & \multirow{3}{*}{60,000} & \multirow{3}{*}{784} & 1.0000 & 19.64 & 19.0 & 2 & 67 & 17.0 & 1 & 134 \\
 &  &  & 0.9950 & 10.32 & 10.0 & 2 & 32 & 10.0 & 1 & 63 \\
 &  &  & 0.9500 & 5.79 & 5.0 & 2 & 20 & 6.0 & 1 & 31 \\
\midrule
\multirow{3}{*}{COCO-i2i} & \multirow{3}{*}{113,287} & \multirow{3}{*}{512} & 1.0000 & 28.59 & 27.0 & 2 & 131 & 24.0 & 1 & 268 \\
 &  &  & 0.9950 & 13.54 & 12.0 & 2 & 75 & 13.0 & 1 & 87 \\
 &  &  & 0.9500 & 6.69 & 6.0 & 2 & 52 & 6.0 & 1 & 49 \\
\midrule
\multirow{5}{*}{Glove25} & \multirow{5}{*}{1,183,514} & \multirow{5}{*}{25} & 1.0000 & 50.41 & 50.0 & 1 & 141 & 43.0 & 1 & 422 \\
 &  &  & 0.9999 & 25.13 & 24.0 & 1 & 98 & 24.0 & 1 & 99 \\
 &  &  & 0.9995 & 16.02 & 15.0 & 1 & 87 & 15.0 & 1 & 78 \\
 &  &  & 0.9950 & 8.29 & 7.0 & 1 & 71 & 7.0 & 1 & 61 \\
 &  &  & 0.9500 & 4.23 & 4.0 & 1 & 44 & 4.0 & 1 & 49 \\
\midrule
\multirow{3}{*}{Yandex DEEP} & \multirow{3}{*}{1,000,000,000} & \multirow{3}{*}{96} & 1.0000 & 144.11 & 138.0 & 25 & 560 & - & - & - \\
 &  &  & 0.9950 & 16.30 & 16.0 & 4 & 115 & - & - & - \\
 &  &  & 0.9500 & 6.77 & 6.0 & 2 & 113 & - & - & - \\
\midrule
\multirow{3}{*}{BIGANN} & \multirow{3}{*}{1,000,000,000} & \multirow{3}{*}{128} & 1.0000 & 105.54 & 95.0 & 12 & 647 & - & - & - \\
 &  &  & 0.9950 & 11.81 & 11.0 & 2 & 117 & - & - & - \\
 &  &  & 0.9500 & 6.31 & 6.0 & 1 & 112 & - & - & - \\
\midrule
\multirow{3}{*}{\shortstack[l]{Facebook\\SimSearchNet++}} & \multirow{3}{*}{1,000,000,000} & \multirow{3}{*}{256} & 1.0000 & 2077.62 & 1577.0 & 44 & 61399 & - & - & - \\
 &  &  & 0.9950 & 8.43 & 7.0 & 1 & 133 & - & - & - \\
 &  &  & 0.9500 & 3.94 & 3.0 & 1 & 33 & - & - & - \\
\midrule
\multirow{3}{*}{Microsoft SPACEV1B} & \multirow{3}{*}{1,402,020,720} & \multirow{3}{*}{100} & 1.0000 & 554.80 & 500.0 & 85 & 2984 & - & - & - \\
 &  &  & 0.9950 & 22.06 & 19.0 & 3 & 141 & - & - & - \\
 &  &  & 0.9500 & 7.36 & 7.0 & 1 & 76 & - & - & - \\
\bottomrule
\end{tabular}
\smallskip
\caption{This table expands on \Cref{tab:coverage-stats}. For every dataset, this table describes the out-degree and in-degree statistics for a range of $\gamma$s.}
\label{tab:coverage-stats-full}
\end{table}

            \begin{table}[h]
            \centering
            % \small
            \setlength{\tabcolsep}{5pt}
            \begin{tabular}{lccccccccccccc}
            \toprule
            \multirow{3}{*}{\textbf{Dataset}} & \multirow{3}{*}{\textbf{Target Recall}} & \multicolumn{12}{c}{\textbf{Distance Computations}} \\
            \cmidrule(lr){3-14}
            & & \multicolumn{4}{c}{\textbf{$k = 1$}} & \multicolumn{4}{c}{\textbf{$k = 10$}} & \multicolumn{4}{c}{\textbf{$k = 100$}} \\
            \cmidrule(lr){3-6} \cmidrule(lr){7-10} \cmidrule(lr){11-14}
            & & \textbf{F} & \textbf{A} & \textbf{R} & \boldmath$\gamma$ & \textbf{F} & \textbf{A} & \textbf{R} & \boldmath$\gamma$ & \textbf{F} & \textbf{A} & \textbf{R} & \boldmath$\gamma$ \\
            \midrule
            \multirow{4}{*}{MNIST} & 0.90 & 254 & 175 & 0.69 & 0.99500 & 356 & 217 & 0.61 & 0.99600 & 1248 & 481 & 0.39 & 0.96500 \\
             & 0.95 & 335 & 220 & 0.66 & 0.99600 & 429 & 285 & 0.67 & 0.98500 & 1303 & 568 & 0.44 & 0.98500 \\
             & 0.97 & 380 & 236 & 0.62 & 0.99700 & 511 & 316 & 0.62 & 0.99500 & 1325 & 672 & 0.51 & 0.99500 \\
             & 0.99 & 517 & 350 & 0.68 & 0.99700 & 669 & 471 & 0.70 & 0.99500 & 1557 & 1073 & 0.69 & 0.98500 \\
            \midrule
            \multirow{4}{*}{Fashion-MNIST} & 0.90 & 266 & 184 & 0.69 & 0.99750 & 341 & 212 & 0.62 & 0.99550 & 1039 & 460 & 0.44 & 0.98000 \\
             & 0.95 & 330 & 216 & 0.66 & 0.99600 & 366 & 249 & 0.68 & 0.99900 & 1079 & 538 & 0.50 & 0.99550 \\
             & 0.97 & 357 & 244 & 0.68 & 0.99900 & 434 & 284 & 0.65 & 0.99550 & 1095 & 613 & 0.56 & 0.99800 \\
             & 0.99 & 454 & 287 & 0.63 & 0.99950 & 602 & 477 & 0.79 & 0.99970 & 1163 & 815 & 0.70 & 0.99900 \\
            \midrule
            \multirow{4}{*}{COCO-i2i} & 0.90 & 383 & 282 & 0.74 & 0.98000 & 562 & 333 & 0.59 & 0.99600 & 1966 & 649 & 0.33 & 0.96500 \\
             & 0.95 & 540 & 338 & 0.63 & 0.99600 & 772 & 435 & 0.56 & 0.99000 & 2054 & 872 & 0.42 & 0.98000 \\
             & 0.97 & 683 & 382 & 0.56 & 0.99800 & 876 & 567 & 0.65 & 0.98000 & 2089 & 1093 & 0.52 & 0.99700 \\
             & 0.99 & 902 & 495 & 0.55 & 0.99600 & 1406 & 857 & 0.61 & 0.99000 & 2534 & 1697 & 0.67 & 0.99000 \\
            \midrule
            \multirow{4}{*}{Glove25} & 0.90 & 1516 & 797 & 0.53 & 0.99990 & 1856 & 985 & 0.53 & 0.99980 & 4728 & 1917 & 0.41 & 0.99970 \\
             & 0.95 & 1977 & 1305 & 0.66 & 0.99995 & 2653 & 1492 & 0.56 & 0.99980 & 5487 & 3854 & 0.70 & 0.99995 \\
             & 0.97 & 2686 & 1526 & 0.57 & 0.99980 & 3103 & 2165 & 0.70 & 0.99980 & 7440 & 4357 & 0.59 & 0.99970 \\
             & 0.99 & 3396 & 3163 & 0.93 & 0.99995 & 4370 & 3519 & 0.81 & 0.99995 & 10656 & 6894 & 0.65 & 0.99995 \\
            \bottomrule
            \end{tabular}
            \smallskip
            \caption{This table shows the average distance computations to reach each target recall, broken out by Recall@$k$ ($k\in\{1,10,100\}$). Each block reports Fully-navigable (F, coverage $=1$), Almost-navigable (A, Pareto frontier over coverage $<1$), their ratio R\,$=$\,A\,/\,F, and $\gamma$ (coverage of the chosen almost-navigable point. For every target recall, $\gamma$-almost navigable graphs require fewer distance computations ($\sim 35 - 47$\% reduction) on average. The data points for this table were chosen by interpolating the curves in \Cref{fig:recall-v-dc}.}
            \label{tab:target-recall-dist}
            \end{table}
            
            \begin{table}
            \centering
            % \small
            \setlength{\tabcolsep}{5pt}
            \begin{tabular}{lccccccccccccc}
            \toprule
            \multirow{3}{*}{\textbf{Dataset}} & \multirow{3}{*}{\textbf{Target Recall}} & \multicolumn{12}{c}{\textbf{Average Degree}} \\
            \cmidrule(lr){3-14}
            & & \multicolumn{4}{c}{\textbf{$k = 1$}} & \multicolumn{4}{c}{\textbf{$k = 10$}} & \multicolumn{4}{c}{\textbf{$k = 100$}} \\
            \cmidrule(lr){3-6} \cmidrule(lr){7-10} \cmidrule(lr){11-14}
            & & \textbf{F} & \textbf{A} & \textbf{R} & \boldmath$\gamma$ & \textbf{F} & \textbf{A} & \textbf{R} & \boldmath$\gamma$ & \textbf{F} & \textbf{A} & \textbf{R} & \boldmath$\gamma$ \\
            \midrule
            \multirow{4}{*}{MNIST} & 0.90 & 19.64 & 10.32 & 0.53 & 0.99500 & 19.64 & 10.81 & 0.55 & 0.99600 & 19.64 & 6.42 & 0.33 & 0.96500 \\
             & 0.95 & 19.64 & 10.81 & 0.55 & 0.99600 & 19.64 & 8.04 & 0.41 & 0.98500 & 19.64 & 8.04 & 0.41 & 0.98500 \\
             & 0.97 & 19.64 & 11.45 & 0.58 & 0.99700 & 19.64 & 10.32 & 0.53 & 0.99500 & 19.64 & 10.32 & 0.53 & 0.99500 \\
             & 0.99 & 19.64 & 11.45 & 0.58 & 0.99700 & 19.64 & 10.32 & 0.53 & 0.99500 & 19.64 & 8.04 & 0.41 & 0.98500 \\
            \midrule
            \multirow{4}{*}{Fashion-MNIST} & 0.90 & 13.55 & 7.41 & 0.55 & 0.99750 & 13.55 & 6.71 & 0.50 & 0.99550 & 13.55 & 5.16 & 0.38 & 0.98000 \\
             & 0.95 & 13.55 & 6.85 & 0.51 & 0.99600 & 13.55 & 8.65 & 0.64 & 0.99900 & 13.55 & 6.71 & 0.50 & 0.99550 \\
             & 0.97 & 13.55 & 8.65 & 0.64 & 0.99900 & 13.55 & 6.71 & 0.50 & 0.99550 & 13.55 & 7.70 & 0.57 & 0.99800 \\
             & 0.99 & 13.55 & 9.72 & 0.72 & 0.99950 & 13.55 & 10.48 & 0.77 & 0.99970 & 13.55 & 8.65 & 0.64 & 0.99900 \\
            \midrule
            \multirow{4}{*}{COCO-i2i} & 0.90 & 28.59 & 9.15 & 0.32 & 0.98000 & 28.59 & 14.29 & 0.50 & 0.99600 & 28.59 & 7.60 & 0.27 & 0.96500 \\
             & 0.95 & 28.59 & 14.29 & 0.50 & 0.99600 & 28.59 & 11.27 & 0.39 & 0.99000 & 28.59 & 9.15 & 0.32 & 0.98000 \\
             & 0.97 & 28.59 & 16.66 & 0.58 & 0.99800 & 28.59 & 9.15 & 0.32 & 0.98000 & 28.59 & 15.27 & 0.53 & 0.99700 \\
             & 0.99 & 28.59 & 14.29 & 0.50 & 0.99600 & 28.59 & 11.27 & 0.39 & 0.99000 & 28.59 & 11.27 & 0.39 & 0.99000 \\
            \midrule
            \multirow{4}{*}{Glove25} & 0.90 & 50.41 & 25.13 & 0.50 & 0.99990 & 50.41 & 20.70 & 0.41 & 0.99980 & 50.41 & 18.47 & 0.37 & 0.99970 \\
             & 0.95 & 50.41 & 30.41 & 0.60 & 0.99995 & 50.41 & 20.70 & 0.41 & 0.99980 & 50.41 & 30.41 & 0.60 & 0.99995 \\
             & 0.97 & 50.41 & 20.70 & 0.41 & 0.99980 & 50.41 & 20.70 & 0.41 & 0.99980 & 50.41 & 18.47 & 0.37 & 0.99970 \\
             & 0.99 & 50.41 & 30.41 & 0.60 & 0.99995 & 50.41 & 30.41 & 0.60 & 0.99995 & 50.41 & 30.41 & 0.60 & 0.99995 \\
            \bottomrule
            \end{tabular}
            \smallskip
            \caption{This table shows the average degree to reach each target recall, broken out by Recall@$k$ ($k\in\{1,10,100\}$). Each block reports Fully-navigable (F, coverage $=1$), Almost-navigable (A, Pareto frontier over coverage $<1$), their ratio R\,$=$\,A\,/\,F, and $\gamma$ (coverage of the chosen almost-navigable point. For every target recall, $\gamma$-almost navigable graphs require fewer edges ($\sim 46 - 55$\% reduction) on average. The data points for this table were chosen by interpolating the curves in \Cref{fig:recall-v-dc}.}
            \label{tab:target-recall-deg}
            \end{table}

\end{document}

%% file: bib_macros.tex
  \usepackage{nth}
  \usepackage{intcalc}

  \newcommand{\cAAAI}[1]{AAAI\ Conference\ on\ Artificial (AAAI)}

%% file: math_macros.tex
\usepackage{amsmath,amsthm,amssymb}
\usepackage{mathtools}
\usepackage{dsfont}
\usepackage{bm}
\usepackage{xcolor}

\newcommand{\R}{\mathbb{R}}

\DeclareMathOperator*{\argmin}{arg\,min}

\newtheorem{theorem}{Theorem}
\newtheorem{lemma}[theorem]{Lemma}

\newtheorem{claim}[theorem]{Claim}

\theoremstyle{definition}
\newtheorem{definition}{Definition}

\theoremstyle{remark}

\Crefname{alg}{Algorithm}{Algorithms}